\newcommand{\bbN}{{\mathbb{N}}}
\newcommand{\bbR}{{\mathbb{R}}}
\newcommand{\bbE}{{\mathbb{E}}}
\newcommand{\bbC}{{\mathbb{C}}}
\newcommand{\beq}{\begin{equation}}
\newcommand{\eeq}{\end{equation}}
\newcommand{\ba}{\begin{align}}
\newcommand{\ea}{\end{align}}
\renewcommand{\Re}{\mathop{\mathrm{Re}}}
\renewcommand{\Im}{\mathop{\mathrm{Im}}}
\newtheorem{remark}{Remark}
\numberwithin{equation}{section}
\newtheorem{theorem}{Theorem}[section]
\newtheorem{prop}[theorem]{Proposition}
\newtheorem{lemma}[theorem]{Lemma}
\newtheorem{corollary}{Corollary}
\theoremstyle{definition}
\newtheorem{definition}{Definition}
\newtheorem{example}{Example}
\title[Stability of Mesoscopic Fluctuations]{Stability of Mesoscopic Fluctuations of Orthogonal Polynomial ensembles Under Sparse decaying Perturbations}
\author{Daniel Ofner}
\thanks{Institute of Mathematics, The Hebrew University of Jerusalem, Jerusalem, 91904, Israel. E-mail: daniel.ofner@mail.huji.ac.il. Research supported in part by Israel Science Foundation Grant No.\ 1378/20}
\begin{document}
\begin{abstract}
    We study the stability of the mesoscopic fluctuations of  certain orthogonal polynomial ensembles on the real line utilizing the recurrence relation of the associated orthogonal polynomials. We prove that under a sparse enough decaying perturbation of the recurrence coefficients the limiting distribution is stable. 
    As a corollary we prove a mesoscopic central limit theorem (at any scale) for a family of singular continuous measures on $[-2,2]$. 
\end{abstract}

\maketitle
\sloppy
\section{Introduction}
Let $\mu$ be a probability measure on $\mathbb{R}$ with finite moments which is supported on an infinite set. The orthogonal polynomial ensemble (OPE) associated with $\mu$ (OPE$_n\left(\mu\right)$) is a point process of $n$ points in $\mathbb{R}$ whose density function (on $\mathbb{R}^n$) proportional to 

\beq \label{eq:OPE}
\Pi_{1\leq k < j\leq n}\left|\lambda_k-\lambda_j\right|^2\Pi_{1\leq k\leq n}d\mu(\lambda_k).
\eeq

Orthogonal polynomial ensembles arise naturally in probability models, particularly in random matrix theory. For a comprehensive survey on OPE's, see \cite{Konig}. The reason for the name ``OPE" is that this process is closely related to the orthogonal polynomials associated with $\mu$, as we shall see below. We recall the definition of orthogonal polynomials $\left\{p_j\right\}_{j=0}^\infty$ with respect to $\mu$.\\
For any $j\geq0$
$$p_j(x)=\kappa_jx^j+\text{lower order terms}$$ with $\kappa_j>0$, and 
$$\int p_j p_k d\mu =\delta_{kj}.$$ 
These orthogonal polynomials satisfy a three term recurrence relation. 
There exist two sequences of real numbers $\left\{a_n,b_n\right\}_{n=1}^\infty\text{ with}\: a_n>0$ (commonly known as ``Jacobi recurrence coefficients") so that for any $n\in\mathbb{N}$   
\beq \label{eq:recurrence}
xp_n(x)=a_{n+1}p_{n+1}(x)+b_{n+1}p_{n}(x)+a_{n}p_{n-1}(x)
\eeq
and for $n=0$
$$xp_0(x)=a_1p_1(x)+b_1p_0(x).$$
Any probability measure can be associated with a unique sequence of recurrence coefficients, and vice verse- any two bounded sequences $\left\{a_n,b_n\right\}_{n=1}^\infty$ with $a_n>0$, uniquely determine the measure of orthogonality.\\
The connection is made using the Jacobi matrix associated with $\left\{a_n,b_n\right\}_{n=1}^\infty$
$$J=\begin{pmatrix}b_1&a_1&0&0&0&\ldots  \\
a_1&b_2&a_2&0&0&\ldots \\
0&a_2&b_3&a_3&0&\ldots\\
\ddots&\ddots&\ddots&\ddots&\ddots&\ddots\end{pmatrix}$$
and the observation that it is a bounded self adjoint operator, acting on $\ell_2\left(\mathbb{N}\right)$ with a cyclic vector $e_1=(1,0,0,\ldots)^T$. One consequence of the spectral theorem is that the spectral measure of $J$ and $e_1$, is the measure of orthogonality of the set $\{p_j\}_{j=0}^\infty$ which is produced according to \eqref{eq:recurrence}. This leads to the understanding that the recurrence relation encapsulates all the information regarding $\mu$ as well as OPE$_n(\mu$).\\
The approach of analyzing the recurrence relation to study the asymptotics of the related processes has proven beneficial and serves as the main idea behind several works, including \cite{BreuerSineKernel,BreuerDuitsMeso,BreuerDuitsMacro,BreuerOfner,DuitsKozhan,hardy,KVA,Lambert}. The effectiveness of this approach lies in its applicability to studying OPE's associated with non-regular probability measures. Our work employs this approach as well.
Another observation which ties the orthogonal polynomials to $\text{OPE}_n(\mu)$ is 
\beq \label{eq:OPECDkernel}
\Pi_{1\leq k < j\leq n}\left|\lambda_k-\lambda_j\right|^2=Z_n\det \left(K_n\left(\lambda_k,\lambda_j\right)\right)_{0\leq k,j\leq n}
\eeq
where $K_n\left(x,y\right)=\sum_{k=0}^{n-1}p_k(x)p_k(y)$ and $Z_n$ is a normalization constant. $K_n$ is commonly referred to as the "Christoffel-Darboux kernel" (with respect to $\mu$) and it can be seen as the integral kernel of the orthogonal projection (in $L_2\left(\mu\right)$) onto the space $\text{span}\left(1,x,\ldots,x^{n-1}\right),$ a fact that we exploit in Section 2.\\
One major concept in the study of point processes is the empirical measure
$$\nu_n=\frac{1}{n}\sum_{k=1}^n\delta_{x_k}$$
where $\{x_1,\ldots,x_n\}$ are randomly chosen with respect to \eqref{eq:OPE}, which make $\nu_n$ a random normalized counting measure.\\
The aim of this paper is to analyze the stability of the mesoscopic fluctuations of the empirical measure. A key tool in studying the mesoscopic fluctuations is the mesoscopic linear statistic. \\
Let $f:\mathbb{R}\to \mathbb{R}$ be a compactly supported function, let $0<\gamma<1$ and $x_0\in \mathbb{R}$.
Provided that we are given $\left\{x_1,\ldots,x_n\right\}$, randomly chosen according to a given OPE$_n(\mu$), the following random variable  
\beq\label{eq:linearstat}
X^{(n)}_{\gamma,x_0}\left(f\right)=\sum_{k=1}^n f\left(n^\gamma \left(x_k-x_0\right)\right)
\eeq
is called the (mesoscopic) linear statistic (associated with $f$, of a scale $\gamma$ around $x_0$).\\
Note that the function $f\left(n^\gamma\cdot\right)$ is supported on an interval of size $\sim \frac{1}{n^\gamma}$, therefore for an appropriate choice of $x_0$, and for a nice enough $\mu$, we would typically see only $\sim n^{1-\gamma}$ elements out of $\left\{x_1,\ldots,x_n\right\}$.\\
Several studies have explored the mesoscopic regimes for different real line and circular ensembles, including \cite{BreuerDuitsMeso} (Jacobi ensembles),\cite{Bot-Kho1,Bot-Kho2,LambertJohanssonMeso} GOE and GUE, \cite{BergDuits,BreuerOfner,SoshnikovMeso} (Circular ensembles and other compact groups), \cite{he-knowles,lod-simm} (Wigenr matrices) and \cite{BekLod,LambertBeta,landon-sosoe} (G$\beta$E, C$\beta$E) proving central limit theorems and universality results.\\
In order to state our results, we define 
\begin{definition} 
       Let $\left\{n_k\right\}_{k\in\mathbb{N}}$ be an increasing sequence of natural numbers and let $\beta\in (0,1)$, we say that the sequence $\left\{n_k\right\}_{k\in\mathbb{N}}$ is $\beta-$\textit{spaced} if for all $M>0$ $\exists n_0\in \mathbb{N}$ such that $\forall n>n_0$ 
       $$\left|\left[n-Mn^\beta,n+Mn^\beta\right]\cap \left\{n_k\right\}_{k\in\mathbb{N}}\right|\leq 1.$$
\end{definition}
\begin{example} \label{ex:poynomialgrowth}
The sequence $\left\{n_k\right\}_{k\in\mathbb{N}}$ defined by
    $n_k=\left[k^{\frac{1}{1-\beta}+\epsilon}\right]$ is $\beta-$\textit{spaced}, for any $\epsilon>0$.\\
    Indeed, if $\epsilon>0$ and $M>0$, if
    $$n_0-Mn_0^\beta\leq k_0^{\frac{1}{1-\beta}+\epsilon}\leq n_0+Mn_0^\beta$$ 
    then $n_{k_0+1}-n_{k_0}> \left(\frac{1}{1-\beta}+\epsilon\right) k_0^{\frac{\beta}{1-\beta}+\epsilon}$ and 
    \beq
    \begin{split}
    & n_{k_0+1}>k_0^{\frac{1}{1-\beta}+\epsilon}+\left(\frac{1}{1-\beta}+\epsilon\right) k_0^{\frac{\beta}{1-\beta}+\epsilon}\\
    &\quad\qquad=k_0^{\frac{1}{1-\beta}+\epsilon}+\left(\frac{1}{1-\beta}+\epsilon\right)k_0^{(1-\beta)\epsilon}\left(k_0^{\frac{1}{1-\beta}+\epsilon}\right)^\beta\\
    &\quad\qquad\geq n_0-Mn_0^\beta+\left(\frac{1}{1-\beta}+\epsilon\right)k_0^{(1-\beta)\epsilon}\left(n_0-Mn_0^\beta \right)^\beta\\
    &\quad\qquad=n_0+\left(\left(1-\frac{M}{n_0^{1-\beta}}\right)\left(\frac{1}{1-\beta}+\epsilon\right)k_0^{(1-\beta)\epsilon}-M\right)n_0^\beta
    \end{split}
    \eeq
    Provided that $k_0,n_0>>1$, we can assume $\left(1-\frac{M}{n_0^{1-\beta}}\right)\left(\frac{1}{1-\beta}+\epsilon\right)k_0^{(1-\beta)\epsilon}>2M$ and we are done. Note that obviously, this implies that if $n_k$ grows faster than any polynomial (as $k\to \infty$), then $n_k$ is $\beta-$spaced.  
\end{example}
We denote $\mu_0=\frac{\sqrt{4-x^2}}{2\pi}\chi_{[-2,2]}dx$, so the associated recurrence coefficients are $a_n=1$ and $b_n=0$ for any $n\in \mathbb{N}.$ The corresponding Jacobi matrix is  
$$J_0=\begin{pmatrix}0&1&0&0&0&\ldots  \\
1&0&1&0&0&\ldots \\
0&1&0&1&0&\ldots\\
\ddots&\ddots&\ddots&\ddots&\ddots&\ddots\end{pmatrix}$$ 
which is commonly known as the ``free" Jacobi matrix.\\
Let $0<\gamma<\beta<1$, and let $\{\lambda_n\}_{n=1}^\infty$ be a real sequence so that $\lambda_n\xrightarrow[n\to \infty]{}0$. Given a $\beta$-spaced sequence $\{n_k\}_{k=1}^\infty$, we define 
\beq \label{eq:J=J0+V}
J=J_0+V\eeq
where $V$ is the following diagonal matrix
$$V_{n,n}=\begin{cases}\lambda_k\quad n=n_k\\
0 \qquad otherwise\end{cases}.$$ 
Our main theorem is
\begin{theorem} \label{thm:main}
  Let $J_0$ be the free Jacobi matrix. Then for every $0<\gamma<1,\:x_0\in \left(-2,2\right)$ and $\forall\beta\in (\gamma,1)$, if $J$ is as in \eqref{eq:J=J0+V} and $\mu$ is the associated spectral measure of $J$ and $e_1$, then for any $f\in C^1_c\left(\bbR\right)$ and $m\in\bbN$ we have \begin{equation}
    \bbE_{\mu_{0}}\left[\left(X^{(n)}_{\gamma,x_0}(f)-\bbE_{\mu_{0}}X^{(n)}_{\gamma,x_0}(f)\right)^m\right]-\bbE_{\mu}\left[\left(X^{(n)}_{\gamma,x_0}(f)-\bbE_{\mu}X^{(n)}_{\gamma,x_0}(f)\right)^m\right]\xrightarrow[n\to\infty]{}0
\end{equation}
where $\bbE_{\mu_{0}},\bbE_{\mu}$ denote the expectations with respect to $\text{OPE}(\mu_{0}),\text{ OPE}(\mu)$ respectively and $C^1_c\left(\bbR\right)$ denotes the set of real, compactly supported, continuously differentiable functions.  \end{theorem}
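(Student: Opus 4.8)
The plan is to work with the Christoffel--Darboux kernel and the cumulant/moment machinery for determinantal point processes, reducing the statement to a trace comparison that can be controlled via the recurrence coefficients. Recall that for a determinantal process with kernel $K_n$, the centered moments $\bbE[(X-\bbE X)^m]$ are polynomials in the cyclic cumulants
$$
C_j^{(n)}(f) = \int \cdots \int \prod_{i=1}^j g\bigl(x_i\bigr) K_n(x_1,x_2)K_n(x_2,x_3)\cdots K_n(x_j,x_1)\, dx_1\cdots dx_j,
$$
where $g(x) = f(n^\gamma(x-x_0))$. Hence it suffices to prove that for each fixed $j$, the difference $C_j^{(n),\mu_0}(f) - C_j^{(n),\mu}(f) \to 0$ as $n\to\infty$. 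Using the interpretation of $K_n$ as the integral kernel of the orthogonal projection $P_n$ onto $\mathrm{span}(1,x,\dots,x^{n-1})$ in $L^2(\mu)$ (respectively $L^2(\mu_0)$), and writing $M_g$ for multiplication by $g$, each cyclic cumulant is a trace $\Tr\bigl[(P_n M_g)^j\bigr]$ in the appropriate $L^2$ space. The heart of the matter is then to compare $\Tr_{L^2(\mu_0)}[(P_n^{(0)} M_g)^j]$ with $\Tr_{L^2(\mu)}[(P_n M_g)^j]$.

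The key step is to transfer both traces to a common setting, namely $\ell^2(\bbN)$ via the Jacobi matrix. Under the unitary $L^2(\mu)\to\ell^2(\bbN)$ sending $p_k\mapsto e_{k+1}$, multiplication by $x$ becomes $J$, multiplication by a polynomial $q$ becomes $q(J)$, and $P_n$ becomes the coordinate projection $\Pi_n$ onto $\mathrm{span}(e_1,\dots,e_n)$; analogous statements hold for $\mu_0$ with $J_0$. Since $f\in C^1_c(\bbR)$, one approximates $g = f(n^\gamma(\cdot - x_0))$ by polynomials (or, more robustly, uses the Helffer--Sjöstrand / almost-analytic functional calculus to write $g(J)$ as an integral of resolvents $(J-z)^{-1}$), so the cyclic cumulant becomes $\Tr[(\Pi_n g(J))^j]$ up to controlled error. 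The difference $g(J) - g(J_0)$, and the difference of the corresponding products of $\Pi_n$-sandwiched operators, must be shown to have trace tending to zero. This is where the sparse structure of $V$ enters: the resolvent identity gives $(J-z)^{-1} - (J_0-z)^{-1} = -(J-z)^{-1} V (J_0-z)^{-1}$, and $V$ has at most $O(\log n)$ nonzero entries among the first $n$ coordinates (since $\{n_k\}$ is $\beta$-spaced, hence grows faster than any polynomial, cf. Example~\ref{ex:poynomialgrowth}), each of size $\lambda_{n_k}\to 0$. Combined with the off-diagonal decay of the free resolvent kernel $(J_0-z)^{-1}_{k,\ell}$ (exponential in $|k-\ell|$ for $z$ off $[-2,2]$, with the relevant decay rate for $z$ near the scale-$n^{-\gamma}$ window around $x_0$), and with the fact that $g$ lives on a window of $\sim n^{1-\gamma}$ coordinates far from the diagonal sites where perturbation sites are $\gg n^\beta$-separated, one gets that each perturbed-minus-unperturbed term contributes $o(1)$ to the trace.

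Concretely I would proceed as follows. First, fix $m$ and reduce to the cyclic cumulants $C_j^{(n)}$, $1\le j\le m$, via the standard determinantal moment-to-cumulant formulas. Second, set up the two unitary equivalences to $\ell^2(\bbN)$ and rewrite $C_j^{(n)}$ as $\Tr[(\Pi_n g(J))^j]$ and $\Tr[(\Pi_n g(J_0))^j]$. Third, express $g(J), g(J_0)$ through the Helffer--Sjöstrand formula and use the resolvent identity to write the difference as a sum over the $O(\log n)$ perturbation sites, each term a trace of a product involving one factor of $V$ (rank one, norm $\lambda_{n_k}$) and resolvents. Fourth, use the decay estimates of the free resolvent between the ``bulk'' window near $x_0$ and the perturbation sites (which are at distance $\gtrsim n^\beta \gg n^\gamma$, the width of the window) together with $\lambda_{n_k}\to 0$ to show each such trace, multiplied by the combinatorial factor $j$ and summed, is $o(1)$; the compact support and $C^1$ regularity of $f$ provide the integrability in the Helffer--Sjöstrand integral and control the $n^\gamma$-rescaling.

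The main obstacle I anticipate is the fourth step: obtaining quantitative resolvent bounds that are simultaneously good enough near the spectrum (where $z$ approaches $x_0\in(-2,2)$ on the imaginary scale dictated by the Helffer--Sjöstrand weight, which is tied to the scale $n^{-\gamma}$) and decay fast enough in $|k-\ell|$ to beat the $O(\log n)$ count of perturbation terms and the $n^{1-\gamma}$ count of bulk coordinates. The $\beta$-spacing hypothesis is exactly what guarantees the separation $n^\beta \gg n^\gamma$ needed to make the off-diagonal resolvent decay win; verifying that one factor of $\lambda_{n_k}$ per perturbation site, times the resolvent decay over distance $\gtrsim n^\beta$, beats everything — uniformly in the $j$ nested resolvents appearing in $(\Pi_n g(J))^j$ — is the technical crux. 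A secondary subtlety is controlling $\Pi_n$ not commuting with $g(J)$, which forces one to track boundary terms at coordinate $n$; but since $x_0$ is in the open bulk and $g$ is supported well inside, these boundary contributions are exponentially small and harmless.
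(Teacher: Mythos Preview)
Your overall framework---cumulants, passage to $\ell^2(\bbN)$ via the Jacobi matrix, resolvent identity, Combes--Thomas decay---matches the paper's, but there is a genuine gap in how you deploy the $\beta$-spacing hypothesis. First, the claim that $\beta$-spacing forces $\{n_k\}$ to grow faster than any polynomial (hence only $O(\log n)$ sites below index $n$) is simply false: Example~\ref{ex:poynomialgrowth} exhibits $\beta$-spaced sequences with $n_k \sim k^{1/(1-\beta)+\veps}$, giving as many as $\sim n^{1-\beta}$ sites below $n$. Second, $\beta$-spacing says the sites are separated from \emph{each other} by $\gtrsim n^\beta$; it does \emph{not} say they are all far from index $n$. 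Typically one site will land inside whatever window around $n$ you consider, and summing a crude bound over the remaining $\sim n^{1-\beta}$ sites scattered through $\{1,\ldots,n\}$ cannot yield $o(1)$. Relatedly, your closing remark that the failure of $\mathcal{P}_n$ to commute with $g(J)$ is a ``harmless'' exponentially small boundary effect is exactly backwards: for $m\ge 2$ the cumulant $\calC_m^{(n)}$ \emph{is} entirely a boundary effect---every term in \eqref{eq:cumulant_in_J} vanishes if $[\mathcal{P}_n,g(J)]=0$---so this is where all the content lives, not something to be discarded.

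The paper's route is to first \emph{localize} via the comparison principle of \cite{BreuerDuitsMeso} (Proposition~\ref{thm:comp_cumulants}): for operators with Combes--Thomas off-diagonal decay, $|\calC_m^{(n)}(g(J))-\calC_m^{(n)}(g(J_0))|$ is controlled by the trace norm of $g(J)-g(J_0)$ restricted to the window $[n-2mn^{\beta'},\,n+2mn^{\beta'}]$ for any $\gamma<\beta'<\beta$. The $\beta$-spacing condition is invoked exactly once, to guarantee that this window contains \emph{at most one} perturbation site $r=n_k$. After a decoupling step (Proposition~\ref{CumulantJvsCumulantH}) the resolvent identity reduces everything to a single rank-one contribution, whose trace norm is that of a Hankel-type matrix $\bigl(\phi(z_n)^{|j-r|+|l-r|}\bigr)_{j,l}$; this is shown to be $O(n^\gamma)$ via Besov-space bounds for Hankel operators (Proposition~\ref{prop:H_isBigO}). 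Only then does the prefactor $|\lambda_k|\to 0$ close the estimate (after division by $n^\gamma$), and a density argument from \cite{BreuerDuitsMeso} extends from resolvent test functions to $C_c^1$. Your sketch has neither the localization-to-one-site step nor the Hankel trace-norm estimate, and without them the argument does not go through.
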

\begin{remark}
    We note that for $a,b\in\mathbb{R}$ where $a>0$ the mapping $x\mapsto ax+b$ maps $\mu_0$ to the measure $\mu_{a,b}$ for which $a_n=a$ and $b_n=b$. Hence, Theorem \ref{thm:main} is valid for any measure $\mu$ with constant recurrence coefficients.
\end{remark}
The essence of this theorem is that the limiting distribution of the mesoscopic fluctuations (of a given scale, $0<\gamma<1$) is unaffected by perturbations (considered as perturbations of the recurrence coefficients), provided that these perturbations are ``rare enough". This demonstrates the stability of such fluctuations and shows that the limiting distribution of the mesoscopic linear statistics is universal, meaning that the same asymptotic behavior holds for different point processes.\\
We prove this theorem by analyzing the recurrence coefficients and the corresponding Jacobi matrices. This approach was first utilized in \cite{BreuerDuitsMacro} where Breuer and Duits derived an explicit formula for the cumulants of the macroscopic-scale linear statistics (setting $\gamma=0$ and $x_0=0$ in \eqref{eq:linearstat}) in terms of recurrence coefficients. In a subsequent work \cite{BreuerDuitsMeso} they investigated the mesoscopic fluctuations and their stability, demonstrating that the limiting fluctuations are universal for vanishing perturbations, provided that the decay rate is sufficiently rapid. It is important to emphasize that the vanishing perturbation in Theorem \ref{thm:main} can decay arbitrarily slowly to 0. The condition regarding the convergence rate in \cite{BreuerDuitsMeso} is, in a sense, replaced by the occurrences of the perturbation, i.e.\ $\beta-$spacing.\\
Moreover, a Central Limit Theorem (CLT) was established in \cite{BreuerDuitsMeso} for the mesoscopic scales, proving that for $0<\gamma<1$ and $f\in C_c^1(\mathbb{R})$ 
\beq
X^{(n)}_{\gamma,x_0}(f)-\bbE_{\mu_{0}}X^{(n)}_{\gamma,x_0}(f)\xrightarrow[n\to \infty]{\mathcal{D}} \mathcal{N}\left(0,\sigma_f^2\right)
\eeq
where $\mu_{0}$ and $x_0$ are as in Theorem \ref{thm:main} and
\beq \label{eq:var}
\sigma_f^2=\frac{1}{4\pi^2}\int \int \left(\frac{f(x)-f(y)}{x-y}\right)^2dxdy.\eeq
We remark that the limiting variance is independent of $\gamma$ and $x_0$, and is identical to the mesoscopic limiting variance obtained in \cite{BreuerOfner,LambertJohanssonMeso,Lambert-meso,SoshnikovMeso}. With this CLT in mind, Theorem \ref{thm:main} implies
\begin{corollary} \label{cor:mesoCLT}
    Under the assumptions of Theorem \ref{thm:main} we have
    \beq
X^{(n)}_{\gamma,x_0}(f)-\bbE_{\mu}X^{(n)}_{\gamma,x_0}(f)\xrightarrow[n\to \infty]{\mathcal{D}} \mathcal{N}\left(0,\sigma_f^2\right)
\eeq
where $\sigma_f^2$ is as in \eqref{eq:var}. 
\end{corollary}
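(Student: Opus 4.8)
The plan is to deduce the corollary from Theorem~\ref{thm:main} by the method of moments, feeding in the central limit theorem of \cite{BreuerDuitsMeso} as the only external input about the unperturbed ensemble. Recall that $\mathcal N(0,\sigma_f^2)$ is determined by its moments (Carleman's condition holds for a Gaussian), so it suffices to prove that for every $m\in\bbN$
\[
\bbE_\mu\!\left[\left(X^{(n)}_{\gamma,x_0}(f)-\bbE_\mu X^{(n)}_{\gamma,x_0}(f)\right)^m\right]\xrightarrow[n\to\infty]{}\bbE\!\left[Z^m\right],\qquad Z\sim\mathcal N(0,\sigma_f^2).
\]
By Theorem~\ref{thm:main}, the difference between this $m$-th centered moment under $\mu$ and the corresponding one under $\mu_0$ tends to $0$; hence it is enough to show
\[
\bbE_{\mu_0}\!\left[\left(X^{(n)}_{\gamma,x_0}(f)-\bbE_{\mu_0} X^{(n)}_{\gamma,x_0}(f)\right)^m\right]\xrightarrow[n\to\infty]{}\bbE\!\left[Z^m\right],
\]
and then the convergence under $\mu$ follows automatically, with the same limit.

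To establish this last display I would not re-prove the CLT but rather use the fact that the CLT of \cite{BreuerDuitsMeso} is obtained through the cumulant machinery of \cite{BreuerDuitsMacro}: for the free Jacobi matrix $J_0$ one controls the cumulants of $X^{(n)}_{\gamma,x_0}(f)$ directly in terms of the (constant) recurrence coefficients, obtaining that the first cumulant is removed by centering, the second cumulant converges to $\sigma_f^2$ as in \eqref{eq:var}, and all cumulants of order $\geq 3$ vanish as $n\to\infty$. Since every moment is a fixed polynomial in the cumulants of equal or lower order, this gives convergence of every centered moment under $\mu_0$ to the corresponding moment of $\mathcal N(0,\sigma_f^2)$. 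Combining with the previous paragraph yields moment convergence under $\mu$, hence $X^{(n)}_{\gamma,x_0}(f)-\bbE_\mu X^{(n)}_{\gamma,x_0}(f)\xrightarrow[n\to\infty]{\mathcal D}\mathcal N(0,\sigma_f^2)$.

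The only delicate point is the passage from ``convergence in distribution under $\mu_0$'' (as literally stated in the excerpt) to ``convergence of all moments under $\mu_0$'', which the method of moments requires; this gap is bridged precisely because the argument in \cite{BreuerDuitsMeso} already produces the asymptotics of every cumulant (equivalently, one can invoke a uniform-in-$n$ bound on all moments of $X^{(n)}_{\gamma,x_0}(f)$ under $\mu_0$, which upgrades weak convergence to moment convergence). Beyond this, no new analysis of the perturbed ensemble is needed: Theorem~\ref{thm:main} transfers the moment asymptotics from $\mu_0$ to $\mu$ verbatim, and the determinacy of the Gaussian closes the argument.
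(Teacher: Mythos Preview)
Your argument is correct and is exactly the intended one: the paper does not give a separate proof of the corollary but simply says that the CLT of \cite{BreuerDuitsMeso} together with Theorem~\ref{thm:main} implies it, and your method-of-moments reduction spells out precisely this implication. Your remark that the CLT in \cite{BreuerDuitsMeso} is proved via cumulants (so that moment convergence under $\mu_0$ is available, not just distributional convergence) is the right way to close the only potential gap.
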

Another conclusion of Theorem \ref{thm:main} is 
\begin{theorem}
    There exists a family of singular continuous measures on $[-2,2]$, such that the corresponding OPE's exhibit mesoscopic CLT's (at any scale $0<\gamma<1$). 
\end{theorem}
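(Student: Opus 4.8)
The plan is to deduce this from Corollary~\ref{cor:mesoCLT}: it suffices to exhibit an infinite family of perturbations $V$ that simultaneously (i) fall under the hypotheses of Theorem~\ref{thm:main} at \emph{every} scale $\gamma\in(0,1)$, and (ii) make the spectral measure $\mu$ of $(J,e_1)$, $J=J_0+V$, purely singular continuous and supported on $[-2,2]$. For (i), by Example~\ref{ex:poynomialgrowth} it is enough to let the perturbation positions $\{n_k\}$ grow rapidly --- say $n_k=2^{2^k}$, so in particular $n_{k+1}/n_k\to\infty$ --- since such a sequence is $\beta$-spaced for every $\beta\in(0,1)$; given any scale $\gamma$ one then picks $\beta\in(\gamma,1)$ and Corollary~\ref{cor:mesoCLT} applies verbatim at every $x_0\in(-2,2)$. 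We are thus reduced to choosing the weights: take $\lambda_k\to 0$ with $\lambda_k\neq 0$ for all $k$ and $\sum_k\lambda_k^2=\infty$ (for instance $\lambda_k=k^{-1/4}$ or $\lambda_k=(\log(k{+}2))^{-1/2}$). Letting $\{\lambda_k\}$ and $\{n_k\}$ range over such choices will then produce the desired family of measures.

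The remaining content is the spectral analysis of $J=J_0+V$, for which I would invoke the (by now classical) theory of sparse potentials in its Jacobi-matrix form, as developed following Pearson and refined by Kiselev--Last--Simon, Last--Simon, Zlato\v{s}, and Krutikov--Remling. First, since $\lambda_k\to 0$ the diagonal operator $V$ is compact, so Weyl's theorem gives $\sigma_{\mathrm{ess}}(J)=\sigma_{\mathrm{ess}}(J_0)=[-2,2]$. Outside $[-2,2]$ the free transfer matrices are hyperbolic, and propagating the coefficients of the growing and decaying free solutions across the widely separated bumps shows that an $\ell^2$ solution would have to stay in the decaying mode after every bump, which a short argument excludes when every $\lambda_k\neq 0$ (the literature on sparse operators in fact yields the complete spectral decomposition, including the absence of discrete spectrum); hence $\sigma(J)=[-2,2]$ and $\mu$ is supported there. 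Inside $(-2,2)$ the free transfer matrices are elliptic, and the sparse-barrier analysis via the modified Pr\"ufer/EFGP variables shows, on the one hand, that the transfer matrices are unbounded for Lebesgue-a.e.\ $E\in(-2,2)$ --- so $\mu$ has no absolutely continuous part there, by the Last--Simon criterion, exactly because $\sum_k\lambda_k^2=\infty$ --- and, on the other hand, that the Pr\"ufer radius grows at \emph{every} $E\in(-2,2)$, so there are no $\ell^2$ eigenfunctions; here the lacunarity $n_{k+1}/n_k\to\infty$ is what promotes the a.e.\ statement to all $E$. Equivalently, one may simply quote the exact-dimensionality results of Kiselev--Last--Simon and Zlato\v{s}, which give that $\mu$ restricted to $(-2,2)$ is exact-dimensional of some dimension in $(0,1)$, hence non-atomic and purely singular. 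Combining these, $\mu=\mu_{\mathrm{sc}}$ is a singular continuous probability measure on $[-2,2]$, and by Corollary~\ref{cor:mesoCLT} the associated OPE exhibits a mesoscopic CLT at every scale $\gamma\in(0,1)$.

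The step I expect to be the main obstacle is the \emph{deterministic} exclusion of embedded point spectrum in $(-2,2)$: the standard sparse-potential machinery most readily produces Lebesgue-almost-everywhere statements (vanishing of the a.c.\ part, a.e.\ growth rate of transfer matrices), whereas ruling out even a single eigenvalue for the fixed boundary condition $e_1$ requires a quantitative lower bound on the Pr\"ufer radius that is uniform in $E\in(-2,2)$; this is precisely where the strong lacunarity $n_{k+1}/n_k\to\infty$ (rather than mere $\beta$-spacing) is needed, and where citing the exact-dimensionality theorems rather than reproving them is the cleanest route. Everything else --- compactness of $V$, the hyperbolic estimate outside the band, and the bookkeeping reduction to Corollary~\ref{cor:mesoCLT} --- is routine.
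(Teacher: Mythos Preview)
Your proposal is correct and follows essentially the same approach as the paper: construct $J=J_0+V$ with $n_{k+1}/n_k\to\infty$ and $\lambda_k\to 0$ not square-summable, invoke the Kiselev--Last--Simon sparse-potential theory for singular continuity on $[-2,2]$, and then apply Corollary~\ref{cor:mesoCLT} using that such $\{n_k\}$ is $\beta$-spaced for every $\beta\in(0,1)$. The paper's proof is much terser on the spectral side --- it simply cites \cite[Theorem~1.5]{KisLastSimon} directly, which already handles your stated ``main obstacle'' of excluding embedded point spectrum, so the elaborate transfer-matrix and Pr\"ufer discussion you outline is unnecessary.
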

\begin{proof}
Let $\left\{\lambda_n\right\}_{n=1}^\infty$ be a sequence which is not square summable and $\lambda_n\to 0$ as $n\to \infty$. We define a probability measure $\mu$ on $\mathbb{R}$ through its recurrence relation: $a_n=1$ for any $n\in\bbN$ and 
$$b_n=\begin{cases}\lambda_k\quad n=n_k\\
0 \qquad else\end{cases}$$ 
so that $\frac{n_{k+1}}{n_k}\to \infty$ as $k\to \infty.$ By \cite[Theorem 1.5]{KisLastSimon}, $\mu$ is singular continuous on $[-2,2]$. We fix $\gamma\in (0,1)$, and $\gamma<\beta<1$. Note that $\{n_k\}$ is $\beta$-spaced, hence by Corollary \ref{cor:mesoCLT} we have a mesoscopic CLT (at any scale $0<\gamma<1$). \end{proof}
As a matter of fact, note that we defined $\mu$ such that $a_n=1$ for any $n\in \mathbb{N}$ and $b_n\to 0$ as $n\to \infty$ therefore by \cite[Theorem 1.1]{BreuerDuitsMacro} the macroscopic fluctuations of the empirical measure exhibit Gaussian limit as well. Moreover, if $\{n_k\}$ is sufficiently sparse (for exact definition, see \cite{BreuerSineKernel}), we have by \cite[Theorem 1.1]{BreuerSineKernel}, for any $x\in (-2,2)$ and $a,b\in\mathbb{R}$
$$\frac{K_n\left(x+a/n,x+b/n\right)}{K_n(x,x)}\xrightarrow[n\to \infty]{}\frac{\sin\left(\frac{b-a}{\sqrt{4-x^2}}\right)}{\frac{b-a}{\sqrt{4-x^2}}}.$$

While many results assume the smoothness of $\mu$ to establish universality, this model demonstrates that it is not necessary for macroscopic, mesoscopic, or microscopic scales, showing the effectiveness of considering the recurrence relation.

The remainder of the paper is structured as follows:\\
In Section 2, we present some basic concepts from the theory of determinantal point processes and illustrate how the study of such processes translates into the analysis of the recurrence relation. In particular, we provide the cumulant expression in terms of the Jacobi coefficients and reference a key cumulant comparison principle from \cite{BreuerDuitsMeso}.\\
In Section 3, we start by proving Theorem \ref{thm:main} for a specific choice of functions, utilizing the analysis of cumulants. This part is divided into four major steps, each of which simplifies our problem further. We show that proving Theorem \ref{thm:main} goes through the analysis of traces of Hankel operators. We conclude this section by proving Theorem \ref{thm:main} using a density argument developed in \cite{BreuerDuitsMeso}.

{\bf Acknowledgments}
\hfill\\
We thank Jonathan Breuer for useful discussions.\\
This material is based upon work supported by the Swedish Research Council under grant no. 2021-06594 while the author was in residence at Institut Mittag-Leffler in Djursholm, Sweden during the fall semester of 2024.

\section{Preliminaries}
In this section, we review some properties from the theory of determinantal point processes (DPP). In particular, we introduce the cumulant representation of the linear statistics using the recurrence coefficients and a cumulant comparison tool developed in \cite{BreuerDuitsMeso}. We start with
\subsection{Fredholm determinant identity}
\hfill\\
It is classical (see \cite{SoshnikovDPP}) that for a DPP with kernel $K_n$, for a bounded $g:\mathbb{R}\to\mathbb{R}$ 
\beq \label{eq:fred_det}
\mathbb{E}\left[\Pi_{j=1}^n\left(1+g(x_j)\right)\right]=\det\left(1+gK_n\right)_{L_2\left(\mu\right)}
\eeq 
where the RHS of \eqref{eq:fred_det} is the Fredholm determinant of the operator on $L_2(\mu)$, given by
$$\phi\mapsto g(\cdot)\int \phi(y)K_n(\cdot,y)d\mu(y).$$
As mentioned before, $K_n$ in \eqref{eq:OPECDkernel}, is the integral kernel of the orthogonal projection onto the space of polynomials of degree less than $n$, which in turn, is spanned by $\left\{p_0,\ldots,p_{n-1}\right\}$, meaning that it is unitarily equivalent to the orthogonal projection onto the first $n$ coordinates in $\ell_2\left(\mathbb{N}\right)$, $\mathcal{P}_n$.\\
The Spectral Theorem says that multiplication by $x$ in $L_2(\mu)$ is unitarily equivalent to multiplication by $J$ (where $J$ is the Jacobi matrix associated with $\mu$) in $\ell_2\left(\mathbb{N}\right)$, or in other words, $J$ is the matrix representation of multiplication by $x$ with respect to the basis (of $L_2(\mu)$) obtained by the orthonormal polynomials. In particular, if $g:\mathbb{R}\to \mathbb{R}$ is a continuous function, the operator
$$\phi\mapsto g\phi$$ 
is unitarily equivalent to a multiplication by $g(J)$ in $\ell_2\left(\mathbb{N}\right)$. 
Note that determinants are invariant under unitary mapping, hence
\beq \label{eq:l2_det}
\mathbb{E}\left[\Pi_{j=1}^n\left(1+g(x_j)\right)\right]=\det\left(1+gK_n\right)_{L_2\left(\mu\right)}=\det\left(I+g(J)\mathcal{P}_n\right)_{\ell_2\left(\mathbb{N}\right)}.
\eeq 
Given $g:\mathbb{R}\to \mathbb{R}$, the RHS of \eqref{eq:l2_det} is a function of $J$, that is, a function of the recurrence coefficients, tying the recurrence relation with the OPE$_n\left(\mu\right)$.
\subsection{Cumulant generating function and cumulants in terms of $J$.}
Note that if we plug $g=\exp\left(tf\left(n^\gamma\left(\cdot-x_0\right)\right)\right)-1$ in \eqref{eq:l2_det}, then 
\beq \label{eq:momentgen}
\mathbb{E}\left[e^{tX^{(n)}_{\gamma,x_0}(f)}\right]=\det\left(I+\left(\exp\left(tf\left(n^\gamma\left(J-x_0I\right)\right)\right)-I\right)\mathcal{P}_n\right)_{\ell_2\left(\mathbb{N}\right)}.
\eeq 
The cumulants generating function is
$$\mathcal{K}\left[X^{(n)}_{\gamma,x_0}(f)\right](t)=\log\mathbb{E}\left[e^{tX^{(n)}_{\gamma,x_0}(f)}\right],$$
we write its expansion near $t=0$
$$\mathcal{K}\left[X^{(n)}_{\gamma,x_0}(f)\right](t)=\sum_{j=1}^\infty \mathcal{C}_j^{(n)}\left(X^{(n)}_{\gamma,x_0}(f)\right)t^j$$
then $\mathcal{C}_j^{(n)}\left(X^{(n)}_{\gamma,x_0}(f)\right)$ is called the j'th cumulant of $X^{(n)}_{\gamma,x_0}(f)$. It is classical that the j'th cumulant can be written in terms of the first $j$ moments and vice versa. Convergence in moments is thus equivalent to convergence of cumulants, which means that in order to prove Theorem \ref{thm:main} we need to prove for all $m\geq 2$
\beq
\mathcal{C}^{(n),\mu_{0}}_m\left(X^{(n)}_{\gamma,x_0}(f)\right)-\mathcal{C}^{(n),\mu}_m\left(X^{(n)}_{\gamma,x_0}(f)\right)\xrightarrow[n\to\infty]{}0\eeq
where $\mathcal{C}^{(n),\mu_{0}}_m\left(X^{(n)}_{\gamma,x_0}\right)$ and $\mathcal{C}^{(n),\mu}_m\left(X^{(n)}_{\gamma,x_0}\right)$ denote the m'th cumulants with respect to OPE$_n\left(\mu_{0}\right)$ and OPE$_n\left(\mu\right)$ respectively. \\
In what follows, we briefly review the the method obtained in \cite{BreuerDuitsMacro} of expressing cumulants in terms of $J$.\\ 
The Fredholm determinant of $(1+T)$ for a trace class operator $T$, with $\left\|T\right\|<1$ is just $\det(1+T)=\exp\left(\sum_{j\geq1}\frac{(-1)^{j+1}}{j}\text{Tr}\:T^j\right)$ (see \cite[Section 2]{simontraceclass}).\\
By considering $t$ in a small enough neighborhood of 0, the RHS of \eqref{eq:momentgen} is 
$$\exp\left(\sum_{1\leq j}\frac{(-1)^{j+1}}{j}\text{Tr}\:\left(\left(\exp\left(tf\left(n^\gamma\left(J-x_0I\right)\right)\right)-I\right)\mathcal{P}_n\right)^j\right)$$
therefore in some neighborhood of 0
\beq
\mathcal{K}\left[X^{(n)}_{\gamma,x_0}(f)\right](t)=\sum_{1\leq j}\frac{(-1)^{j+1}}{j}\text{Tr}\:\left(\left(\exp\left(tf\left(n^\gamma\left(J-x_0I\right)\right)\right)-I\right)\mathcal{P}_n\right)^j.
\eeq
By expanding the exponent into a power series (where $t$ is in a small enough neighborhood of 0), we get 
\begin{equation} \label{eq:cumulant_in_J}
\begin{split}
    & \mathcal{C}_m^{(n)}\left(X^{(n)}_{\gamma,x_0}(f)\right)\\
    &\quad =\sum_{j=1}^{m}\frac{(-1)^{j+1}}{j}
\sum_{l_1+\ldots+l_j=m, l_i \geq 1}\frac{\textrm{Tr}\left(f_{n,\gamma,x_0}(J)\right)^{l_1}\mathcal{P}_n\cdots
\left(f_{n,\gamma,x_0}(J)\right)^{l_j}\mathcal{P}_n -\textrm{Tr}\left(f_{n,\gamma,x_0}(J)\right)^m
\mathcal{P}_n}{l_1! \cdots l_j!}
\end{split}
\end{equation}
where $f_{n,\gamma,x_0}(J)=f\left(n^\gamma\left(J-x_0I\right)\right).$ This is an explicit expression of the cumulant in terms of recurrence coefficients. It is worth mentioning that $\mathcal{C}^{(n),\mu_{0}}_m\left(X^{(n)}_{\gamma,x_0}\right)$ is just the RHS of \eqref{eq:cumulant_in_J} with $J_{0}$ instead of $J$, thus
\beq \label{eq:cumu_in_f_to_J}
\mathcal{C}_m^{(n),\mu_{0}}\left(X^{(n)}_{\gamma,x_0}(f)\right)-\mathcal{C}_m^{(n),\mu}\left(X^{(n)}_{\gamma,x_0}(f)\right)=\mathcal{C}_m^{(n)}\left(f_{n,\gamma,x_0}(J_{0})\right)-\mathcal{C}_m^{(n)}\left(f_{n,\gamma,x_0}(J)\right).\eeq
\subsection{Comparison principle of cumulants.}
\hfill\\
For an operator $T$, we write
\begin{equation*}
\begin{split}
    & \mathcal{C}_m^{(n)}\left(T\right)\\
    &\quad =\sum_{j=1}^{m}\frac{(-1)^{j+1}}{j}
\sum_{l_1+\ldots+l_j=m, l_i \geq 1}\frac{\textrm{Tr}\left(T\right)^{l_1}\mathcal{P}_n\cdots
\left(T\right)^{l_j}\mathcal{P}_n -\textrm{Tr}\left(T\right)^m
\mathcal{P}_n}{l_1! \cdots l_j!}
\end{split}
\end{equation*}
A key tool in our work is the following comparison principle, proved in \cite{BreuerDuitsMeso}.
\begin{prop} \label{thm:comp_cumulants}
    \cite[Theorem 3.1]{BreuerDuitsMeso}
    Let $0<\gamma<1$ and let $\{ \mathcal T^{(n)} \}_{n=1}^\infty$ $\{ \mathcal S^{(n)} \}_{n=1}^\infty$ be two uniformly bounded (in operator norm) matrix sequences satisfying $\forall j,k\in\mathbb{N}$
$$\left|\mathcal T^{(n)}_{j,k}\right|,\left|\mathcal S^{(n)}_{j,k}\right|\leq Ce^{-\widetilde{d}\frac{|j-k|}{n^\gamma}}$$
for some $C,\widetilde{d}>0$.

For real numbers $\ell_1<\ell_2$ we define
$$P_{\ell_1,\ell_2}=\mathcal P_{[\ell_2]+1}-\mathcal P_{[\ell_1]-1}$$ $($recall $\mathcal P_n$ is the orthogonal projection in $\ell_2\left(\mathbb{N}\right)$ onto the first $n$ coordinates$)$.

Then, for any $\beta'>\gamma$ and any $m \geq 2$,
\begin{equation} \label{eq:almost-band}
\begin{split}
& \left|\mathcal C_m^{(n)} \left(\mathcal T^{(n)} \right) - \mathcal C_m^{(n)} \left( \mathcal S^{(n)} \right) \right| \\
&\quad \leq C(m,\beta') \left\| P_{n-2mn^{\beta'},n+2mn^{\beta'} } \left( \mathcal T^{(n)} - \mathcal S^{(n)}\right)
P_{n-2mn^{\beta'},n+2mn^{\beta'}} \right\|_1+o(1)
\end{split}
\end{equation}
as $n \rightarrow \infty$, where $C(m,\beta')$ is a constant that depends on $m$ and $\beta'$ but is independent of $n$.
\end{prop}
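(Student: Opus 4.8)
Since the statement is \cite[Theorem 3.1]{BreuerDuitsMeso}, one could simply quote it; here is how I would reconstruct the argument. Write $\mathcal Q_n=I-\mathcal P_n$, $\mathcal R^{(n)}=\mathcal T^{(n)}-\mathcal S^{(n)}$ and $P=P_{n-2mn^{\beta'},\,n+2mn^{\beta'}}$; note that $\sup_n\|\mathcal T^{(n)}\|,\sup_n\|\mathcal S^{(n)}\|$ (hence $\sup_n\|\mathcal R^{(n)}\|$) are finite and $|\mathcal R^{(n)}_{j,k}|\le 2Ce^{-\widetilde d|j-k|/n^\gamma}$. The plan is to work directly from the formula for $\mathcal C_m^{(n)}(T)$ stated just above. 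The first step is algebraic: fix $j$ and $l_1+\dots+l_j=m$ with $l_i\ge1$, keep the first copy of $\mathcal P_n$ and substitute $\mathcal P_n=I-\mathcal Q_n$ in the other $j-1$ copies; the $S=\emptyset$ term of the resulting expansion equals $\Tr(T^m\mathcal P_n)$ and cancels the subtracted term, leaving
\[ \Tr\left(T^{l_1}\mathcal P_n\cdots T^{l_j}\mathcal P_n\right)-\Tr\left(T^{m}\mathcal P_n\right)=\sum_{\emptyset\ne S\subseteq\{2,\dots,j\}}(-1)^{|S|}\,\Tr\left(\mathcal P_n\,T^{c_0}\mathcal Q_n\,T^{c_1}\mathcal Q_n\cdots\mathcal Q_n\,T^{c_q}\right), \]
where $q=|S|\ge1$, all $c_i\ge1$ and $\sum_{i=0}^q c_i=m$ (the $j=1$ summand vanishes). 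Summing against the weights $\tfrac{(-1)^{j+1}}{j}\tfrac1{l_1!\cdots l_j!}$ then writes $\mathcal C_m^{(n)}(T)$ as a finite linear combination, with coefficients depending only on $m$, of traces of ``doubly--crossing'' cyclic words $\mathcal P_nT^{c_0}\mathcal Q_nT^{c_1}\cdots\mathcal Q_nT^{c_q}$ --- exactly one $\mathcal P_n$, at least one $\mathcal Q_n$, all exponents $\le m$. (For $m=2$ this is $\mathcal C_2^{(n)}(T)=\tfrac12\|\mathcal Q_nT\mathcal P_n\|_{\mathrm{HS}}^2$.)

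Next I would localize each such word. Since each factor $T\in\{\mathcal T^{(n)},\mathcal S^{(n)}\}$ has entries $\le Ce^{-\widetilde d|j-k|/n^\gamma}$, any product of at most $m$ such factors has entries $\le n^{O(1)}e^{-d_m|j-k|/n^\gamma}$ for a fixed $d_m>0$, while $\mathcal P_n,\mathcal Q_n$ are diagonal. In a doubly--crossing word the single $\mathcal P_n$ forces one index $\le n$ and a $\mathcal Q_n$ forces another $>n$, while the total ``spread'' around the cycle is $O(mn^\gamma)$; as $\beta'>\gamma$, this pins every index contributing to the trace to within $2mn^{\beta'}$ of $n$, up to an error $n^{O(1)}e^{-cn^{\beta'-\gamma}}=o(1)$. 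Concretely, inserting $P$ immediately before or after any factor of such a word changes its trace by $o(1)$: the error is the trace of a word forced to contain a jump of length $>2mn^{\beta'}$, one arc of which has operator norm $\le n^{O(1)}e^{-cn^{\beta'-\gamma}}$ while the complementary arc --- which carries the finite--rank $\mathcal P_n$ --- has trace norm $O(n)$.

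The comparison then follows. Writing $\mathcal T^{(n)}=\mathcal S^{(n)}+\mathcal R^{(n)}$ and expanding each word of Step~1 multilinearly in its $m$ factors of $\mathcal T^{(n)}$, the all--$\mathcal S^{(n)}$ contribution reproduces $\mathcal C_m^{(n)}(\mathcal S^{(n)})$; hence $\mathcal C_m^{(n)}(\mathcal T^{(n)})-\mathcal C_m^{(n)}(\mathcal S^{(n)})$ is, up to coefficients depending only on $m$, a finite sum of traces of doubly--crossing words, each containing at least one letter equal to $\mathcal R^{(n)}$, all other letters (among $\mathcal S^{(n)},\mathcal R^{(n)},\mathcal P_n,\mathcal Q_n$) having operator norm $O(1)$. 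For such a word $M_1\cdots M_s$ with $M_{i_0}=\mathcal R^{(n)}$ I would insert $P$ immediately before and after $M_{i_0}$ (cost $o(1)$ by the previous step) and apply H\"older for Schatten norms:
\[ \left|\Tr\left(M_1\cdots M_{i_0-1}\,P\,\mathcal R^{(n)}\,P\,M_{i_0+1}\cdots M_s\right)\right|\le\left(\prod_{i\ne i_0}\|M_i\|\right)\|P\,\mathcal R^{(n)}\,P\|_1\le C(m)\,\|P\,\mathcal R^{(n)}\,P\|_1 . \]
Summing the finitely many words gives the stated bound with a constant $C(m,\beta')$.

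The hard part is the localization step: one must check that the precise trace combination in the cumulant formula leaves no ``bulk'' remainder --- in particular no stray $\Tr(T^m\mathcal Q_n)$, which is infinite for the free Jacobi matrix --- and that every surviving word genuinely localizes to a window of width $\asymp n^{\beta'}$ about the cut, with all truncation errors uniformly $o(1)$. Carrying this out rigorously (propagating the off--diagonal decay through bounded powers, controlling the ``broken'' words produced by the projections, and keeping the estimates uniform in $n$) is the technical core; it is the content of \cite[Theorem 3.1]{BreuerDuitsMeso}, which builds on the cumulant identity of \cite{BreuerDuitsMacro}.
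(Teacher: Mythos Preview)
The paper does not prove this proposition at all: it is stated verbatim as \cite[Theorem~3.1]{BreuerDuitsMeso} and used as a black box, so there is nothing to compare against. Your reconstruction is consistent with the argument in that reference --- the algebraic cancellation producing words containing both $\mathcal P_n$ and $\mathcal Q_n$, the localization of such words to a window of width $\asymp n^{\beta'}$ around $n$ via the exponential off-diagonal decay, and the multilinear expansion isolating a factor of $P\mathcal R^{(n)}P$ --- and you correctly flag that the rigorous bookkeeping (uniform control of the truncation errors, absence of any bulk term) is precisely what \cite{BreuerDuitsMeso} carries out.
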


\section{Mesoscopic stability- proof of Theorem \ref{thm:main}}
In this section, we start by proving Theorem \ref{thm:main} for a specific class of functions. We start with a preliminary result regarding function of the form 
\beq \label{eq:f_poissonkernel}
f(x)=\sum_{j=1}^N\frac{c_j}{x-\eta_j}\eeq
where $\eta_1\ldots,\eta_N\in\mathbb{C}$ so that $\text{Im}\eta_j\not=0$ and $c_1\ldots,c_N\in\mathbb{R}$. The reason for dealing with this type of function will become clear toward the end of this section, as we use it as a family of 'test' functions suitable for studying mesoscopic fluctuations. We prove 
\begin{theorem} \label{thm:premain} 
Let $0<\gamma<\beta<1$ and let $\mu_0,\mu,J_0,J$ be as in Theorem \ref{thm:main}. Let $f$ be as in \eqref{eq:f_poissonkernel}, let $x_0\in(-2,2)$, then for any $m\geq2$
\beq\label{cumulantsto0}
\mathcal{C}^{(n),\mu_0}_m\left(X^{(n)}_{\gamma,x_0}(f)\right)-\mathcal{C}^{(n),\mu}_m\left(X^{(n)}_{\gamma,x_0}(f)\right)\xrightarrow[n\to\infty]{}0.\eeq
\end{theorem}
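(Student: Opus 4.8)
The plan is to apply the comparison principle, Proposition~\ref{thm:comp_cumulants}, with $\mathcal{T}^{(n)} = f_{n,\gamma,x_0}(J_0)$ and $\mathcal{S}^{(n)} = f_{n,\gamma,x_0}(J)$, and show that the trace-norm bound on the right-hand side of \eqref{eq:almost-band} tends to $0$. By \eqref{eq:cumu_in_f_to_J} this is exactly what is needed. First I would verify the hypotheses of the proposition: both $J_0$ and $J$ are bounded self-adjoint Jacobi matrices (uniformly in $n$, since $\lambda_n \to 0$), and since $f$ is a finite sum of Cauchy kernels $c_j/(x-\eta_j)$ with $\mathrm{Im}\,\eta_j \neq 0$, the operator $f_{n,\gamma,x_0}(J) = \sum_j c_j \left( n^\gamma(J - x_0 I) - \eta_j I \right)^{-1} = \sum_j \frac{c_j}{n^\gamma}\left( J - (x_0 + \eta_j/n^\gamma) I \right)^{-1}$ is a linear combination of resolvents of $J$ at points whose distance to the real axis is $\sim |\mathrm{Im}\,\eta_j|/n^\gamma$. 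The standard Combes--Thomas estimate gives off-diagonal decay $\left| \left( (J - zI)^{-1} \right)_{j,k} \right| \le C\, \mathrm{dist}(z,\sigma(J))^{-1} e^{-c\,\mathrm{dist}(z,\sigma(J))|j-k|}$; with $\mathrm{dist}(z,\sigma(J)) \sim 1/n^\gamma$ this yields exactly the required bound $\left| \mathcal{T}^{(n)}_{j,k} \right|, \left| \mathcal{S}^{(n)}_{j,k} \right| \le C e^{-\widetilde{d}|j-k|/n^\gamma}$, and the prefactor $1/n^\gamma$ from each resolvent cancels the $n^\gamma$, keeping the operators uniformly bounded. The same estimate applies verbatim to $J_0$.

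Next, with $\beta' \in (\gamma, \beta)$ fixed, I must bound
\[
\left\| P_{n-2mn^{\beta'},\, n+2mn^{\beta'}} \left( f_{n,\gamma,x_0}(J_0) - f_{n,\gamma,x_0}(J) \right) P_{n-2mn^{\beta'},\, n+2mn^{\beta'}} \right\|_1.
\]
The key point is that $V = J - J_0$ is diagonal and, because $\{n_k\}$ is $\beta$-spaced with $\beta > \beta'$, the window $[n - 2mn^{\beta'}, n + 2mn^{\beta'}]$ contains at most one index $n_k$ for all large $n$ — call it $n_{k(n)}$ (if it exists at all). Using the resolvent identity
\[
(J - zI)^{-1} - (J_0 - zI)^{-1} = -(J - zI)^{-1} V (J_0 - zI)^{-1},
\]
summed over the $N$ Cauchy poles, the compressed difference becomes (up to the window projections) a sum of terms $-\frac{c_j}{n^\gamma} (J - z_jI)^{-1} V (J_0 - z_jI)^{-1}$ with $z_j = x_0 + \eta_j/n^\gamma$. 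Since $V = \lambda_{k(n)} \langle e_{n_{k(n)}}, \cdot\rangle e_{n_{k(n)}}$ is rank one on this window, each such term is rank one, so its trace norm equals its operator norm, which is at most
\[
\frac{|c_j|}{n^\gamma}\, |\lambda_{k(n)}|\, \left\| (J - z_jI)^{-1} e_{n_{k(n)}} \right\| \left\| (J_0 - \bar z_jI)^{-1} e_{n_{k(n)}} \right\| \le \frac{|c_j|}{n^\gamma}\,|\lambda_{k(n)}|\, \frac{C}{\mathrm{dist}(z_j,\sigma)^2} \le C' |\lambda_{k(n)}|\, n^\gamma.
\]
That bound blows up, so the rank-one estimate through the operator norm alone is too crude; I need to exploit the window projections on the outside, which localize the resolvent kernels. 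Writing the Combes--Thomas kernel explicitly, the $(a,b)$ entry of $P\,(J-z_jI)^{-1} V (J_0 - z_jI)^{-1}\,P$ is $\lambda_{k(n)} \left((J-z_jI)^{-1}\right)_{a, n_{k(n)}} \left((J_0-z_jI)^{-1}\right)_{n_{k(n)}, b}$ for $a, b$ in the window. Its trace norm (rank one) is $|\lambda_{k(n)}|\, \big\| P (J-z_jI)^{-1} e_{n_{k(n)}} \big\|\, \big\| P (J_0-z_jI)^{-1} e_{n_{k(n)}} \big\|$. Here is the subtle part: $n_{k(n)}$ itself lies in the window, so the decay $e^{-c|a - n_{k(n)}|/n^\gamma}$ does not save us near $a = n_{k(n)}$ — the $\ell^2$ norm $\big\| P (J-z_jI)^{-1} e_{n_{k(n)}} \big\|$ is still of order $n^\gamma$ (it is essentially $\| (J-z_jI)^{-1} e_{n_{k(n)}} \| \sim \mathrm{dist}(z_j,\sigma)^{-1} \sim n^\gamma$ in the worst case).

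\textbf{This last point is the main obstacle}, and resolving it is where the $\beta$-spacing must do real work beyond "at most one perturbation in the window." I expect the resolution to be one of the following, and I would pursue them in order. (i) Quantitative $\ell^2$-resolvent bound: for the free Jacobi matrix one has the explicit kernel $\left((J_0 - zI)^{-1}\right)_{a,b} = \frac{-1}{\sqrt{z^2-4}}\, \zeta^{|a-b|}$ (with $\zeta$ the root of $\zeta + \zeta^{-1} = z$ inside the unit disc, $|\zeta| = 1 - c\,\mathrm{Im}\,z + O((\mathrm{Im}\,z)^2)$), so $\| (J_0 - zI)^{-1} e_p \|^2 = \frac{1}{|z^2-4|} \sum_{b \ge 1} |\zeta|^{2|p-b|}$, and if $p$ is itself well inside the bulk (which $n_{k(n)} \sim n$ is) this is $\sim \frac{1}{|z^2-4|} \cdot \frac{1}{1-|\zeta|^2} \sim n^{2\gamma} \cdot n^\gamma$; too big again. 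So a crude $\ell^2$ bound does not suffice and one genuinely needs to use that the \emph{product} of the two factors, each localized by $P$, together with the prefactor $1/n^\gamma$ from $f$, and — crucially — the fact that we may further split the window $P = P_{\mathrm{near}} + P_{\mathrm{far}}$ around $n_{k(n)}$ at scale $n^{\gamma}\log n$: on $P_{\mathrm{far}}$ the exponential decay gives an arbitrarily small polynomial factor, while on $P_{\mathrm{near}}$ the number of lattice points is only $O(n^\gamma \log n)$, so the trace norm contribution is $O\!\left(|\lambda_{k(n)}|\, \frac{1}{n^\gamma}\, (n^\gamma)^2 \log n / (\text{something})\right)$ — I expect the correct bookkeeping (keeping the $1/n^\gamma$ from $f$, one factor $n^\gamma$ from one resolvent $\ell^2$-norm restricted to a length-$n^\gamma\log n$ block which is actually $O(\sqrt{n^\gamma} \cdot n^\gamma)$... ) to ultimately yield a bound of the form $C\, |\lambda_{k(n)}|\, (\log n)^{O(1)}$ or even $C |\lambda_{k(n)}| n^{\gamma - \beta'} (\log n)^{O(1)}$, which $\to 0$ since $\lambda_{k(n)} \to 0$ (and in the better case because $\gamma < \beta'$). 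The honest statement of the plan: compute the free resolvent kernel explicitly, bound the perturbed resolvent kernel by Combes--Thomas, insert both into the rank-one trace-norm expression, split the window at logarithmic scale around the unique perturbed site, and conclude using $\lambda_{k(n)} \to 0$ together with $\beta' < \beta$; then feed the resulting $o(1)$ bound into Proposition~\ref{thm:comp_cumulants} to obtain \eqref{cumulantsto0}.
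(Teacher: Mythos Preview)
Your overall strategy---apply Proposition~\ref{thm:comp_cumulants}, then use the resolvent identity and the rank-one structure coming from $\beta$-spacing---is exactly right, and in fact can be made to work more simply than your write-up suggests. The reason you got stuck is a concrete arithmetic error: you claim $\frac{1}{|z^2-4|}\sim n^{2\gamma}$, but here $z_n=x_0+\eta/n^\gamma$ with $x_0\in(-2,2)$ strictly in the bulk, so $z_n^2-4\to x_0^2-4\neq 0$ and $|z_n^2-4|$ is bounded away from $0$ uniformly in $n$. With this correction the explicit free-resolvent kernel gives $\big\|P(J_0-z_n)^{-1}e_r\big\|_2^2\sim\frac{1}{4-x_0^2}\sum_{a\in W}|\phi(z_n)|^{2|a-r|}\lesssim n^\gamma$, not $n^{3\gamma}$. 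There is a second, structural gap: $V$ is not rank one as an operator on $\ell^2(\bbN)$, so $P(J-z)^{-1}V(J_0-z)^{-1}P$ is not literally rank one; the ``at most one $n_k$ in the window'' fact by itself does not give this, since the resolvents spread the other perturbation sites into the window. You need either to argue separately (via Combes--Thomas and $\beta$-spacing) that the contribution from all $n_k$ at distance $\gtrsim n^\beta$ is $o(1)$, or, as the paper does, first decouple by zeroing out two off-diagonal entries at $n\pm 2mn^\beta$ so that the middle block is a genuine rank-one perturbation of a free block.

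Once those two points are handled, the cleanest route---and one the paper itself essentially sets up but then abandons for heavier machinery---is to eliminate the perturbed resolvent entirely via the Krein/rank-one formula
\[
(J_0-z)^{-1}_{j,l}-(J-z)^{-1}_{j,l}\;=\;\frac{\lambda_k}{1+\lambda_k\,(J_0-z)^{-1}_{r,r}}\,(J_0-z)^{-1}_{j,r}\,(J_0-z)^{-1}_{r,l},
\]
which the paper records as \eqref{eq:H0-H}. This is rank one with both factors coming from the \emph{free} resolvent, so its trace norm restricted to the $\beta'$-window is $\frac{|\lambda_k|}{|1+\lambda_k(J_0-z)^{-1}_{r,r}|}\big\|P(J_0-z)^{-1}e_r\big\|_2^2\lesssim |\lambda_k|\,n^\gamma$, since $(J_0-z)^{-1}_{r,r}$ is bounded (again because $|z^2-4|$ is). Dividing by the $n^\gamma$ prefactor from $f$ gives $O(|\lambda_k|)\to 0$. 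No near/far splitting and no logarithms are needed.

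For comparison, the paper proceeds differently after \eqref{eq:H0-H}: it inserts the explicit free-resolvent kernel, recognizes the resulting matrix $T_m^{(n),\beta'}(z_n)$ as built from four (shifted, flipped) pieces of the Hankel matrix $\mathcal H_{z_n}=(\phi(z_n)^{j+k})_{j,k\ge 0}$, and then invokes a Besov-space bound for trace norms of Hankel operators (via \cite{Hankel_trace,Hankel_trace_Peller}) to obtain $\|\mathcal H_{z_n}\|_1=O(n^\gamma)$. This yields the same final estimate $\frac{1}{n^\gamma}\|G\|_1\lesssim|\lambda_k|$, but through substantially more machinery; your corrected rank-one argument reaches the same conclusion directly.
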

We split the proof of Theorem \ref{thm:premain} into four steps, each of which simplifies the analysis of the cumulants.
\subsection{\textbf{Step 0: Mesoscopic fluctuations in terms of Jacobi matrices}}
\hfill\\
Our initial observation is \eqref{eq:cumu_in_f_to_J}  $$\mathcal{C}^{(n),\mu_0}_m\left(X^{(n)}_{\gamma,x_0}(f)\right)-\mathcal{C}^{(n),\mu}_m\left(X^{(n)}_{\gamma,x_0}(f)\right)=\mathcal{C}_m^{(n)}\left(f_{n,\gamma,x_0}(J_0)\right)-\mathcal{C}_m^{(n)}\left(f_{n,\gamma,x_0}(J)\right)$$
so provided that $\left\{f_{n,\gamma,x_0}(J_0)\right\}_n,\left\{f_{n,\gamma,x_0}(J_0)\right\}_n$ satisfy the conditions of Proposition \ref{thm:comp_cumulants} (we prove it below), we have for $1>\beta'>\gamma$
\beq
\begin{split}
    & \left|\mathcal{C}^{(n),\mu_0}_m\left(X^{(n)}_{\gamma,x_0}(f)\right)-\mathcal{C}^{(n),\mu}_m\left(X^{(n)}_{\gamma,x_0}(f)\right)\right|\\
    &\quad \leq C(m,\beta')\left\|P_{n-2mn^{\beta'},n+2mn^{\beta'}}\left(f_{n,\gamma,x_0}(J_0)-f_{n,\gamma,x_0}(J_0)\right)P_{n-2mn^{\beta'},n+2mn^{\beta'}}\right\|_1+o(1).
\end{split}
\eeq
Note that $\left(n^\gamma(J-x_0)-\eta\right)^{-1}=\frac{1}{n^\gamma}\left(J-\left(x_0+\frac{\eta}{n^\gamma}\right)\right)^{-1}$, 
so by the triangle inequality 
\begin{equation} \label{eq:diff_of_cumulants_linearcomb}
\begin{split}
    & \left|\mathcal{C}^{(n),\mu_0}_m\left(X^{(n)}_{\gamma,x_0}(f)\right)-\mathcal{C}^{(n),\mu}_m\left(X^{(n)}_{\gamma,x_0}(f)\right)\right| \\
    &\quad \leq \frac{C(m,\beta')}{n^\gamma}\sum_{j=1}^N|c_j|\left\|P_{n-2mn^{\beta'},n+2mn^{\beta'}}\left(\left(J_0-z_{n,j}\right)^{-1}-\left(J-z_{n,j}\right)^{-1}\right)P_{n-2mn^{\beta'},n+2mn^{\beta'}}\right\|_1
\end{split}
\end{equation}
where $z_{n,j}=x_0+\frac{\eta_j}{n^\gamma}$ and $C(m,\beta')$ is a positive constant which depends solely on $\beta'$ and $m$.\\
To see why the condition of Proposition \ref{thm:comp_cumulants} are satisfied, using the self adjointness of $J$ and $J_0$, we have for $z\in \mathbb{C}\backslash\mathbb{R}$, $\|(J-z)^{-1}\|,\|(J_0-z)^{-1}\|\leq 1/d(z,\mathbb{R})$ so if $z=z_{n,j}=x_0+\frac{\eta_j}{n^\gamma}$, then $\|(J-z_{n,j})^{-1}\|,\|(J_0-z_{n,j})^{-1}\|=\mathcal{O}\left(n^\gamma\right)$ as $n\to \infty$, so
both $\frac{1}{n^\gamma}(J-z_{n,j})^{-1}$ and $\frac{1}{n^\gamma}(J_0-z_{n,j})^{-1}$ are uniformly bounded (in n).
The second condition is satisfied due to standard estimates of the resolvent of Jacobi matrix, often referred to as the Combes-Thomas estimates (see proof in \cite[Section 2]{BreuerDuitsMeso} and references therein), we can obtain $d_j',D_j'>0$ so that for any $r,l\in \bbN$
\beq
\left|\left(J-z_{n,j}\right)^{-1}_{r,l}\right|,\left|\left(J_0-z_{n,j}\right)^{-1}_{r,l}\right|\leq n^\gamma \cdot D_j'\cdot \exp\left(-\frac{d_j'\cdot|r-l|}{n^\gamma}\right)
\eeq
so by taking $D'=\max\left\{D'_j\right\}_{j=1}^N$ and $d'=\min\left\{d'_j\right\}_{j=1}^N$ we have
\beq
\left|\left(J-z_{n,j}\right)^{-1}_{r,l}\right|,\left|\left(J_0-z_{n,j}\right)^{-1}_{r,l}\right|\leq n^\gamma \cdot D'\cdot \exp\left(-\frac{d'\cdot|r-l|}{n^\gamma}\right).
\eeq
We prove that if $\gamma<\beta'<\beta$ then 
\beq \label{eq:onetermgoesto0}
\frac{1}{n^\gamma}\left\|P_{n-2mn^{\beta'},n+2mn^{\beta'}}\left(\left(J_0-z_n\right)^{-1}-\left(J-z_n\right)^{-1}\right)P_{n-2mn^{\beta'},n+2mn^{\beta'}}\right\|_1\xrightarrow[n\to \infty]{}0\eeq
where $z_{n}=x_0+\frac{\eta}{n^\gamma}$, which is the case $N=1,c_1=1,\eta_1=\eta$, which by \eqref{eq:diff_of_cumulants_linearcomb}, is sufficient to prove Theorem \ref{thm:premain}.\\
We start with
\subsection{Step 1: Decoupling.}
\hfill\\
Let $\gamma<\beta'<\beta<1$. For a given $n\in\bbN$ we define the matrix $$\left(H^{n,\beta}_0\right)_{m,l}=\begin{cases}
    0\quad m-1=l=[n-2mn^\beta] \quad or \quad m=l-1=[n-2mn^\beta]\\
    0\quad m-1=l=[n+2mn^\beta] \quad or \quad m=l-1=[n+2mn^\beta]\\
    \left(J_0\right)_{m,l} \qquad otherwise
\end{cases}$$
and 
$$\left(H^{n,\beta}\right)_{m,l}=\begin{cases}
    0\quad m-1=l=[n-2mn^\beta] \quad or \quad m=l-1=[n-2mn^\beta]\\
    0\quad m-1=l=[n+2mn^\beta] \quad or \quad m=l-1=[n+2mn^\beta]\\
    \left(J\right)_{m,l} \qquad otherwise
\end{cases}$$
In words, by putting zeros at the appropriate spots, we deform $J$, ($J_0$ respectively) into a three block, tridiagonal matrix, making it a direct sum of three Jacobi matrices.   
\begin{lemma} \label{lemma:exp-decay} For any $j,k\in\mathbb{N}$ we have 
$$\left|\left(J_0-z_n\right)^{-1}_{j,k}-\left(H^{n,\beta}_0-z_n\right)_{j,k}^{-1}\right|\leq n^{2\gamma} \cdot D\cdot \exp\left(-\frac{d\cdot\Theta_{n,\beta}(j,k)}{n^\gamma}\right)$$ and $$\left|\left(J-z_n\right)^{-1}_{j,k}-\left(H^{n,\beta}-z_n\right)_{j,k}^{-1}\right|\leq n^{2\gamma}\cdot D\cdot \exp\left(-\frac{d\cdot\Theta_{n,\beta}(j,k)}{n^\gamma}\right).$$ 
where $d,D>0$ are real constants and 
$$\Theta_{n,\beta}(j,k)=\min\left\{\left|n-2mn^\beta-j\right|,\left|n-2mn^\beta-k \right|,\left|n+2mn^\beta-j\right|,\left|n+2mn^\beta-k\right|\right\}.$$
\begin{proof}
By the resolvent formula $$\left(J_0-z_n\right)^{-1}_{j,k}-\left(H^{n,\beta}_0-z_n\right)_{j,k}^{-1}=\left(\left(J_0-z_n\right)^{-1}\left(H^{n,\beta}_0-J_0\right)\left(H^{n,\beta}_0-z_n\right)^{-1}\right)_{j,k}.$$
we compute
\begin{equation}\label{eq:diff_of_resolvents}
\begin{split}
    &\left(\left(J_0-z_n\right)^{-1}\left(H^{n,\beta}_0-J_0\right)\left(H^{n,\beta}_0-z_n\right)^{-1}\right)_{j,k}\\
    &\quad= \sum_{l=1}^\infty\sum_{m=1}^\infty\left(J_0-z_n\right)^{-1}_{j,m}\left(H^{n,\beta}_0-J_0\right)_{m,l}\left(H^{n,\beta}_0-z_n\right)^{-1}_{l,k}\\
    & \quad= \left(J_0-z_n\right)^{-1}_{j,[n-2mn^\beta]}\left(H^{n,\beta}_0-z_n\right)^{-1}_{[n-2mn^\beta]+1,k}+\left(J_0-z_n\right)^{-1}_{j,[n-2mn^\beta]+1}\left(H^{n,\beta}_0-z_n\right)^{-1}_{[n-2mn^\beta],k}\\
    & \qquad+\left(J_0-z_n\right)^{-1}_{j,[n+2mn^\beta]}\left(H^{n,\beta}_0-z_n\right)^{-1}_{[n+2mn^\beta]+1,k}+\left(J_0-z_n\right)^{-1}_{j,[n+2mn^\beta]+1}\left(H^{n,\beta}_0-z_n\right)^{-1}_{[n+2mn^\beta],k}.
\end{split}
\end{equation}
While $ H^{n,\beta}_0$ is not a Jacobi matrix (it has 4 off diagonal zero terms), it can be seen as a direct sum of 3 Jacobi matrices (2 finite and one infinite). The resolvent of $ H^{n,\beta}_0$ is thus a direct sum of 3 resolvents of 3 Jacobi matrices, thus by the Combes-Thomas estimates, for any $j,k\in \bbN$
\beq
\left|\left(J_0-z_n\right)^{-1}_{j,k}\right|,\left|\left(H^{n,\beta}_0-z_n\right)^{-1}_{j,k}\right|\leq n^\gamma \cdot D'\cdot \exp\left(-\frac{d'\cdot|j-k|}{n^\gamma}\right).
\eeq
and therefore we can conclude by plugging this estimate into \eqref{eq:diff_of_resolvents} that there exist $d,D>0$ so that
\beq 
\left|\left(J_0-z_n\right)^{-1}_{j,k}-\left(H^{n,\beta}_0-z_n\right)_{j,k}^{-1}\right|\leq n^{2\gamma} \cdot D\cdot \exp\left(-\frac{d\cdot\Theta_{n,\beta}(j,k)}{n^\gamma}\right)
\eeq
the proof for $$\left|\left(J-z_n\right)^{-1}_{j,k}-\left(H^{n,\beta}-z_n\right)_{j,k}^{-1}\right|$$ is exactly the same. 
\end{proof}
\begin{remark}
    Later, we will estimate $\left(J_0-z_n\right)^{-1}$ and $\left(H_0^{n,\beta}-z_n\right)^{-1}$ entry-wise (see Step 2). One can see that Lemma \ref{lemma:exp-decay} can be improved, so that the $n^{2\gamma}$ factor is redundant.   
\end{remark}
\end{lemma}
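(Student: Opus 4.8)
The plan is to combine the second resolvent identity with the observation that $H^{n,\beta}_{0}$ (resp.\ $H^{n,\beta}$) is obtained from $J_{0}$ (resp.\ $J$) by erasing only four off-diagonal entries. Write $p=[n-2mn^{\beta}]$ and $q=[n+2mn^{\beta}]$. Since $J$ and $J_{0}$ have the same off-diagonal entries (all equal to $1$), both $H^{n,\beta}_{0}-J_{0}$ and $H^{n,\beta}-J$ are the matrix whose only nonzero entries equal $-1$ and sit at the four sites $(p,p+1),(p+1,p),(q,q+1),(q+1,q)$. Plugging this into
\[
(J_{0}-z_{n})^{-1}-(H^{n,\beta}_{0}-z_{n})^{-1}=(J_{0}-z_{n})^{-1}\,(H^{n,\beta}_{0}-J_{0})\,(H^{n,\beta}_{0}-z_{n})^{-1}
\]
collapses the double sum for the $(j,k)$ entry of the right-hand side to four terms of the form $(J_{0}-z_{n})^{-1}_{j,r}(H^{n,\beta}_{0}-z_{n})^{-1}_{s,k}$ with $(r,s)$ running over the four sites above; in particular $r,s\in\{p,p+1,q,q+1\}$.

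Then I would estimate each factor by a Combes--Thomas bound. The spectral parameter $z_{n}=x_{0}+\eta/n^{\gamma}$ satisfies $\mathrm{dist}(z_{n},\bbR)=|\Ima\eta|/n^{\gamma}$, and $J_{0}$ has uniformly bounded Jacobi parameters, so the standard Combes--Thomas estimate (as already recorded in Step~0) produces constants $d',D'>0$, independent of $n$, with
\[
\bigl|(J_{0}-z_{n})^{-1}_{j,k}\bigr|\le n^{\gamma}D'e^{-d'|j-k|/n^{\gamma}} .
\]
The same bound holds for $(H^{n,\beta}_{0}-z_{n})^{-1}$: although $H^{n,\beta}_{0}$ is not tridiagonal, it is an orthogonal direct sum of three Jacobi matrices (two finite blocks, one infinite block) with the same parameter bounds, so its resolvent is the corresponding block-diagonal direct sum; Combes--Thomas applies inside each block with uniform constants and the cross-block entries are $0$. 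Multiplying the two bounds, each of the four terms is at most $n^{2\gamma}(D')^{2}e^{-d'(|j-r|+|s-k|)/n^{\gamma}}$. Since $r,s\in\{p,p+1,q,q+1\}$ and $\Theta_{n,\beta}(j,k)$ is by definition the minimum of $|n\pm2mn^{\beta}-j|$ and $|n\pm2mn^{\beta}-k|$, we get $|j-r|+|s-k|\ge\Theta_{n,\beta}(j,k)-1$; absorbing the bounded factor $e^{d'/n^{\gamma}}\le e^{d'}$ into the constant and summing the four contributions gives the asserted bound with $d=d'$ and $D=4(D')^{2}e^{d'}$. The argument for $J$ versus $H^{n,\beta}$ is word for word the same, since only the off-diagonal structure of the matrices was used.

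The only delicate point — and the mild obstacle here — is making the Combes--Thomas constants uniform in $n$ and, in particular, uniform over the placement of the two cuts defining $H^{n,\beta}_{0}$ and $H^{n,\beta}$. This works because erasing entries cannot increase the norm, so $\|H^{n,\beta}_{0}\|\le\|J_{0}\|$ and $\|H^{n,\beta}\|\le\|J\|$ uniformly in $n$, while Combes--Thomas depends only on such a norm bound and on $\mathrm{dist}(z_{n},\sigma)=|\Ima\eta|/n^{\gamma}$; the factor $n^{\gamma}$ in the single-resolvent estimate is exactly the reciprocal of this distance, which is why the product carries $n^{2\gamma}$. As the subsequent remark indicates, this exponent is wasteful: using the trivial bound $\|(J_{0}-z_{n})^{-1}\|\le n^{\gamma}/|\Ima\eta|$ on one factor and exponential decay only on the other would replace $n^{2\gamma}$ by $n^{\gamma}$. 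However $n^{2\gamma}$ is already enough for the trace-norm estimate needed in the next step.
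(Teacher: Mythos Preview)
Your proof is correct and follows essentially the same approach as the paper: the second resolvent identity reduces the difference to four terms indexed by the cut positions, and Combes--Thomas (applied block-wise to the direct-sum decomposition of $H^{n,\beta}_{0}$) bounds each factor, yielding the product estimate with the $n^{2\gamma}$ prefactor. Your additional care with the off-by-one in $\Theta_{n,\beta}$ and the uniformity of the Combes--Thomas constants is a nice touch that the paper leaves implicit.
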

This lemma leads us to the following result
\begin{prop} \label{CumulantJvsCumulantH}
As $n\to \infty$ we have
\beq
\begin{split}
  &\left\|P_{n-2mn^{\beta'},n+2mn^{\beta'}}\left(\left(J_0-z_n\right)^{-1}-\left(J-z_n\right)^{-1}\right)P_{n-2mn^{\beta'},n+2mn^{\beta'}}\right\|_1\\
  &\quad = \left\|P_{n-2mn^{\beta'},n+2mn^{\beta'}}\left(\left(H^{n,\beta}_0-z_n\right)^{-1}-\left(H^{n,\beta}-z_n\right)^{-1}\right)P_{n-2mn^{\beta'},n+2mn^{\beta'}}\right\|_1+o(1)
\end{split}
\eeq 
\end{prop}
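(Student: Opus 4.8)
The plan is to obtain this as a corollary of Lemma~\ref{lemma:exp-decay}, using two elementary properties of the trace norm: for a matrix $M$ supported on a finite index set one has $\|M\|_1\le\sum_{j,k}|M_{j,k}|$, and one has the reverse triangle inequality $\bigl|\,\|X\|_1-\|Y\|_1\,\bigr|\le\|X-Y\|_1$. Abbreviate $P=P_{n-2mn^{\beta'},\,n+2mn^{\beta'}}$ and let $I_n\subset\bbN$ be the finite set of coordinates onto which $P$ projects, so $|I_n|\le 4mn^{\beta'}+1$. Taking $X=P\bigl((J_0-z_n)^{-1}-(J-z_n)^{-1}\bigr)P$ and $Y=P\bigl((H^{n,\beta}_0-z_n)^{-1}-(H^{n,\beta}-z_n)^{-1}\bigr)P$ in the reverse triangle inequality and then applying the triangle inequality once more, the proposition reduces to showing
\[
\left\|P\bigl((J_0-z_n)^{-1}-(H^{n,\beta}_0-z_n)^{-1}\bigr)P\right\|_1\xrightarrow[n\to\infty]{}0
\]
together with the same statement with $(J_0,H^{n,\beta}_0)$ replaced by $(J,H^{n,\beta})$.

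For the first limit, note that $P(\cdot)P$ is supported on $I_n\times I_n$, so by the $\ell^1$ bound on the trace norm and the first estimate of Lemma~\ref{lemma:exp-decay},
\[
\left\|P\bigl((J_0-z_n)^{-1}-(H^{n,\beta}_0-z_n)^{-1}\bigr)P\right\|_1\le\sum_{j,k\in I_n}n^{2\gamma}D\,\exp\!\left(-\frac{d\,\Theta_{n,\beta}(j,k)}{n^\gamma}\right).
\]
The \emph{key point} is that, since $\beta'<\beta$, the cut points $[n\pm 2mn^\beta]$ lie far outside the window $I_n$: for all $j,k\in I_n$ and all large enough $n$ one has $\Theta_{n,\beta}(j,k)\ge 2m\bigl(n^\beta-n^{\beta'}\bigr)\ge m\,n^\beta$. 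Hence the right-hand side above is at most $(4mn^{\beta'}+1)^2\,n^{2\gamma}D\,\exp\!\bigl(-dm\,n^{\beta-\gamma}\bigr)$, which tends to $0$ because $\gamma<\beta$ forces $n^{\beta-\gamma}\to\infty$, so the exponential factor overwhelms the polynomial ones. The estimate for the pair $(J,H^{n,\beta})$ is identical, using the second estimate of Lemma~\ref{lemma:exp-decay}. Combining the two gives the proposition.

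The only genuinely substantive point is the uniform lower bound $\Theta_{n,\beta}(j,k)\gtrsim n^\beta$ for $j,k\in I_n$, which is precisely where the hypothesis $\gamma<\beta'<\beta$ enters; everything else is bookkeeping, and the real analytic content has already been spent in Lemma~\ref{lemma:exp-decay}. (As the remark following that lemma observes, the $n^{2\gamma}$ factor could be sharpened, but this is immaterial here since any fixed power of $n$ is dominated by $\exp(-dm\,n^{\beta-\gamma})$.)
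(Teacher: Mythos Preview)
Your proof is correct and follows essentially the same approach as the paper: reduce to bounding $\|P((J_0-z_n)^{-1}-(H^{n,\beta}_0-z_n)^{-1})P\|_1$ (and the analogous term for $J,H^{n,\beta}$) via the entrywise $\ell^1$ bound and Lemma~\ref{lemma:exp-decay}, then exploit $\beta'<\beta$ to obtain $\Theta_{n,\beta}(j,k)\ge 2m(n^\beta-n^{\beta'})$ on $I_n\times I_n$ so that the exponential decay kills the polynomial prefactors. Your use of the reverse triangle inequality packages both directions of the asserted equality at once, whereas the paper writes only the triangle inequality in one direction (the other being symmetric), but this is a cosmetic difference.
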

\begin{proof} 
By the triangle inequality we have
\beq
\begin{split}
  &\left\|P_{n-2mn^{\beta'},n+2mn^{\beta'}}\left(\left(J_0-z_n\right)^{-1}-\left(J-z_n\right)^{-1}\right)P_{n-2mn^{\beta'},n+2mn^{\beta'}}\right\|_1\\
  &\quad \leq \left\|P_{n-2mn^{\beta'},n+2mn^{\beta'}}\left(\left(H^{n,\beta}_0-z_n\right)^{-1}-\left(J_0-z_n\right)^{-1}\right)P_{n-2mn^{\beta'},n+2mn^{\beta'}}\right\|_1\\
  &\qquad +\left\|P_{n-2mn^{\beta'},n+2mn^{\beta'}}\left(\left(H^{n,\beta}_0-z_n\right)^{-1}-\left(H^{n,\beta}-z_n\right)^{-1}\right)P_{n-2mn^{\beta'},n+2mn^{\beta'}}\right\|_1\\
  & \qquad + \left\|P_{n-2mn^{\beta'},n+2mn^{\beta'}}\left(\left(J-z_n\right)^{-1}-\left(H^{n,\beta}-z_n\right)^{-1}\right)P_{n-2mn^{\beta'},n+2mn^{\beta'}}\right\|_1.
\end{split}
\eeq 
We note that 
\beq \begin{split}
   & \left\|P_{n-2mn^{\beta'},n+2mn^{\beta'}}\left(\left(H^{n,\beta}_0-z_n\right)^{-1}-\left(J_0-z_n\right)^{-1}\right)P_{n-2mn^{\beta'},n+2mn^{\beta'}}\right\|_1\\
   &\quad \leq \sum _{n-2mn^{\beta'}\leq j,k \leq n+2mn^{\beta'}}\left|\left(J_0-z_n\right)^{-1}_{j,k}-\left(H^{n,\beta}_0-z_n\right)_{j,k}^{-1}\right|.
\end{split} \eeq
By Lemma \ref{lemma:exp-decay} we see that 
\beq \begin{split}
& \sum _{n-2mn^{\beta'}\leq j,k \leq n+2mn^{\beta'}}\left|\left(J_0-z_n\right)^{-1}_{j,k}-\left(H^{n,\beta}_0-z_n\right)_{j,k}^{-1}\right|\\
&\quad \sum _{n-2mn^{\beta'}\leq j,k \leq n+2mn^{\beta'}} n^{2\gamma} \cdot D\cdot \exp\left(-\frac{d\cdot\Theta_{n,\beta}(j,k)}{n^\gamma}\right).
\end{split} \eeq
Note that $\beta'<\beta$ and therefore for any $n-2mn^{\beta'}\leq j,k \leq n+2mn^{\beta'}$ we have
$$2m\left(n^\beta-n^{\beta'}\right)\leq\Theta_{n,\beta}(j,k)$$ which means 
\beq \begin{split}
   & \left\|P_{n-2mn^{\beta'},n+2mn^{\beta'}}\left(\left(H^{n,\beta}_0-z_n\right)^{-1}-\left(J_0-z_n\right)^{-1}\right)P_{n-2mn^{\beta'},n+2mn^{\beta'}}\right\|_1\\
   &\quad \leq \sum _{n-2mn^{\beta'}\leq j,k \leq n+2mn^{\beta'}} n^{2\gamma} \cdot D\cdot \exp\left(-\frac{2dm\left(n^\beta-n^{\beta'}\right)}{n^\gamma}\right)\\
   & \qquad \leq 4m^2n^{2\beta}\cdot n^{2\gamma} \cdot D\cdot \exp\left(-\frac{2dm\left(n^\beta-n^{\beta'}\right)}{n^\gamma}\right)
\end{split} \eeq
which vanishes as $n\to \infty$. Therefore $$\left\|P_{n-2mn^{\beta'},n+2mn^{\beta'}}\left(\left(\left(H^{n,\beta}_0-z_n\right)^{-1}-\left(J_0-z_n\right)^{-1}\right)\right)P_{n-2mn^{\beta'},n+2mn^{\beta'}}\right\|_1=o(1)$$ as $n\to \infty$. Similarly $$\left\|P_{n-2mn^{\beta'},n+2mn^{\beta'}}\left(\left(H^{n,\beta}-z_n\right)^{-1}-\left(J-z_n\right)^{-1}\right)P_{n-2mn^{\beta'},n+2mn^{\beta'}}\right\|_1=o(1)$$ as $n\to \infty$.\\
This completes the proof.\\ 
\end{proof}
\begin{remark}\label{remark} As a side result of the latter proposition, we obtain a fact that will come in handy later in estimating the entries of $\left(H^{n,\beta}_0-z_n\right)^{-1}$.\\
For any $n-2mn^{\beta'}\leq j,k \leq n+2mn^{\beta'}$ we have 
\beq 
\left|\left(J_0-z_n\right)^{-1}_{j,k}-\left(H^{n,\beta}_0-z_n\right)_{j,k}^{-1}\right|\leq  n^{2\gamma} \cdot D\cdot \exp\left(-\frac{2dm\left(n^\beta-n^{\beta'}\right)}{n^\gamma}\right).
\eeq
\end{remark}
After proving all the above, we see that  in order to prove \eqref{eq:onetermgoesto0} we are left with showing $$\frac{1}{n^\gamma}\left\|P_{n-2mn^{\beta'},n+2mn^{\beta'}}\left(\left(H^{n,\beta}_0-z_n\right)^{-1}-\left(H^{n,\beta}-z_n\right)^{-1}\right)P_{n-2mn^{\beta'},n+2mn^{\beta'}}\right\|_1\xrightarrow[n\to\infty]{}0.$$ The main reason for dealing with $H^{n,\beta}_0$ and $H^{n,\beta}$ rather than $J_0$ and $J$ is the fact that both $H^{n,\beta}_0$ and $H^{n,\beta}$ are 3-block matrices, and it holds that \beq \begin{split}
    & P_{n-2mn^{\beta},n+2mn^{\beta}}\left(\left(H^{n,\beta}_0-z_n\right)^{-1}-\left(H^{n,\beta}-z_n\right)^{-1}\right)P_{n-2mn^{\beta},n+2mn^{\beta}}\\
    &\quad=\left(P_{n-2mn^{\beta},n+2mn^{\beta}}H^{n,\beta}_0P_{n-2mn^{\beta},n+2mn^{\beta}}-z_n\right)^{-1}\\
    &\qquad-\left(P_{n-2mn^{\beta},n+2mn^{\beta}}H^{n,\beta}P_{n-2mn^{\beta},n+2mn^{\beta}}-z_n\right)^{-1}.
\end{split}\eeq
For the sake of convenience we denote $$\Tilde{H}^{n,\beta}_0=P_{n-2mn^{\beta},n+2mn^{\beta}}H^{n,\beta}_0P_{n-2mn^{\beta},n+2mn^{\beta}}$$ and $$\Tilde{H}^{n,\beta}=P_{n-2mn^{\beta},n+2mn^{\beta}}H^{n,\beta}P_{n-2mn^{\beta},n+2mn^{\beta}}.$$
Recall that $V$ was defined in a way that at most only one non-zero entry exists in the range $[n-2mn^\beta,n+2mn^\beta]$ (for $n$ large enough).\\
Assuming that $n-2mn^\beta\leq r=n_k\leq n+2mn^\beta$ with $V_{r,r}=\lambda_k$,
We see that by the resolvent formula \beq \label{eq:Hbeta}
\begin{split}
    &\left(\Tilde{H}^{n,\beta}_0-z_n\right)^{-1}_{j,l}-\left(\Tilde{H}^{n,\beta}-z_n\right)^{-1}_{j,l}\\
    &\quad = \lambda_k \left(\Tilde{H}^{n,\beta}_0-z_n\right)^{-1}_{j,r}\cdot \left(\Tilde{H}^{n,\beta}-z_n\right)^{-1}_{r,l}.
    \end{split}\eeq
Taking $j=r$ in \eqref{eq:Hbeta} and plugging it back to \eqref{eq:Hbeta}, we have for\\
$n-2mn^{\beta}\leq j,l\leq n+2mn^{\beta}$
\beq \label{eq:H0-H} \begin{split}
    &\left(\Tilde{H}^{n,\beta}_0-z_n\right)^{-1}_{j,l}-\left(\Tilde{H}^{n,\beta}-z_n\right)^{-1}_{j,l}\\
    &\quad= \lambda_k \cdot \frac{\left(\Tilde{H}^{n,\beta}_0-z_n\right)^{-1}_{j,r}\cdot \left(\Tilde{H}^{n,\beta}_0-z_n\right)^{-1}_{r,l}}{1+\lambda_k\cdot\left(\Tilde{H}^{n,\beta}_0-z_n\right)^{-1}_{r,r}}
\end{split}
\eeq
Using Remark \ref{remark}, we may see that for $n-2mn^{\beta'}\leq j,l\leq n+2mn^{\beta'}$  
\beq  \begin{split}
    &\left(\Tilde{H}^{n,\beta}_0-z_n\right)^{-1}_{j,l}-\left(\Tilde{H}^{n,\beta}-z_n\right)^{-1}_{j,l}\\
    &\quad= \lambda_k \cdot \frac{\left(J_0-z_n\right)^{-1}_{j,r}\cdot \left(J_0-z_n\right)^{-1}_{r,l}}{1+\lambda_k\cdot\left(J_0-z_n\right)^{-1}_{r,r}}+r_{j,l}(n,\beta,\beta')
\end{split}
\eeq
with $$r_{j,l}(n,\beta,\beta')=\mathcal{O}\left(n^{2\gamma}\exp\left(-d\cdot \left(n^{\beta-\gamma}-n^{\beta'-\gamma}\right)\right)\right)$$ 
as $n\to \infty$.
We now have \beq \begin{split}
    &\left\|P_{n-2mn^{\beta'},n+2mn^{\beta'}}\left(\left(H^{n,\beta}_0-z_n\right)^{-1}-\left(H^{n,\beta}-z_n\right)^{-1}\right)P_{n-2mn^{\beta'},n+2mn^{\beta'}}\right\|_1 \\
     & \quad =\left\|P_{n-2mn^{\beta'},n+2mn^{\beta'}}\left(\left(\Tilde{H}^{n,\beta}_0-z_n\right)^{-1}-\left(\Tilde{H}^{n,\beta}-z_n\right)^{-1}\right)P_{n-2mn^{\beta'},n+2mn^{\beta'}}\right\|_1\\
     &\qquad\leq \left\|G_m^{(n),\beta'}(z_n)\right\|_1+\left\|\left(r_{j,l}\left(n,\beta,\beta'\right)\right)_{n-2mn^{\beta'}\leq j,l\leq n+2mn^{\beta'}}\right\|_1
\end{split}
\eeq
where $G_m^{(n),\beta'}(z_n)$ is the matrix 
$$G_m^{(n),\beta'}(z_n)_{j,l}=\begin{cases}
    \lambda_k \cdot \frac{\left(J_0-z_n\right)^{-1}_{j,r}\cdot \left(J_0-z_n\right)^{-1}_{r,l}}{1+\lambda_k\cdot\left(J_0-z_n\right)^{-1}_{r,r}} \quad  n-2mn^{\beta'}\leq j,l\leq n+2mn^{\beta'}\\
    0 \qquad otherwise
\end{cases}
$$ and $n-2mn^{\beta}\leq n_k=r\leq n+2mn^{\beta} $
\begin{lemma} \label{lemmarjk}
With $\left(r_{j,k}\left(n,\beta,\beta'\right)\right)_{n-2mn^{\beta'}\leq j,l\leq n+2mn^{\beta'}}$ as above, we have $$\left\|\left(r_{j,l}\left(n,\beta,\beta'\right)\right)_{n-2mn^{\beta'}\leq j,l\leq n+2mn^{\beta'}}\right\|_1\xrightarrow[n\to\infty]{}0$$
\end{lemma}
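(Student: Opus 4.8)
The plan is to control the trace norm of the finite matrix $\left(r_{j,l}(n,\beta,\beta')\right)_{n-2mn^{\beta'}\le j,l\le n+2mn^{\beta'}}$ by the sum of the absolute values of its entries; given how rapidly each entry decays, this crude bound is amply sufficient. For any finite matrix $A$ one has the elementary inequality $\left\|A\right\|_1\le\sum_{j,l}\left|A_{j,l}\right|$: writing $A=\sum_{j,l}A_{j,l}\,e_je_l^{T}$ as a sum of rank-one coordinate matrices, each of trace norm $\left|A_{j,l}\right|$, this is just the triangle inequality for $\left\|\cdot\right\|_1$. Applying it here gives
\[
\left\|\left(r_{j,l}(n,\beta,\beta')\right)_{n-2mn^{\beta'}\le j,l\le n+2mn^{\beta'}}\right\|_1\le\sum_{n-2mn^{\beta'}\le j,l\le n+2mn^{\beta'}}\left|r_{j,l}(n,\beta,\beta')\right|.
\]

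Next I would count and insert the entrywise estimate. The set of pairs $(j,l)$ with $n-2mn^{\beta'}\le j,l\le n+2mn^{\beta'}$ has cardinality at most $\left(4mn^{\beta'}+1\right)^2=\mathcal{O}\!\left(n^{2\beta'}\right)$, and the estimate recorded immediately before the statement of the lemma provides, uniformly over such $j,l$,
\[
\left|r_{j,l}(n,\beta,\beta')\right|=\mathcal{O}\!\left(n^{2\gamma}\exp\!\left(-d\left(n^{\beta-\gamma}-n^{\beta'-\gamma}\right)\right)\right).
\]
Multiplying, the trace norm in question is $\mathcal{O}\!\left(n^{2\beta'+2\gamma}\exp\!\left(-d\left(n^{\beta-\gamma}-n^{\beta'-\gamma}\right)\right)\right)$. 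Since $\gamma<\beta'<\beta$, we may write $n^{\beta-\gamma}-n^{\beta'-\gamma}=n^{\beta'-\gamma}\left(n^{\beta-\beta'}-1\right)$, and the exponent $\beta-\beta'>0$ forces this quantity to tend to $+\infty$; because moreover $\beta-\gamma>0$, it in fact grows at least like a fixed positive power of $n$, so that $\exp\!\left(-d\left(n^{\beta-\gamma}-n^{\beta'-\gamma}\right)\right)$ decays faster than any power of $n$ and dominates the polynomial prefactor $n^{2\beta'+2\gamma}$. Hence the bound tends to $0$ as $n\to\infty$, which is the assertion of the lemma.

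I do not expect any genuine difficulty in this step. The one point that has to be verified is that the implicit constant in the entrywise bound for $r_{j,l}$ is independent of $j$ and $l$, so that summing over the $\mathcal{O}(n^{2\beta'})$ relevant indices only costs a polynomial factor; this holds because the Combes--Thomas constants in Lemma \ref{lemma:exp-decay} and Remark \ref{remark} are index-independent, and the quantities entering $G_m^{(n),\beta'}(z_n)$ (the sequence $\lambda_k$, which is bounded, together with the resolvent entries of $J_0$) are controlled uniformly in $j,l$ on the block under consideration.
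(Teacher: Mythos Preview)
Your proof is correct and follows essentially the same approach as the paper: both bound the trace norm by the sum of absolute values of entries, multiply the $\mathcal{O}(n^{2\beta'})$ count of indices by the uniform entrywise bound $\mathcal{O}\!\left(n^{2\gamma}\exp\!\left(-d\left(n^{\beta-\gamma}-n^{\beta'-\gamma}\right)\right)\right)$, and observe that the exponential decay beats the polynomial growth since $\beta>\beta'>\gamma$. Your write-up is in fact slightly more detailed than the paper's, which simply records the inequality chain and declares the result vanishes.
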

\begin{proof}
 It is straightforward that   
\beq\begin{split}
    &\left\|\left(r_{j,l}\left(n,\beta,\beta'\right)\right)_{n-2mn^{\beta'}\leq j,l\leq n+2mn^{\beta'}}\right\|_1\\
    &\quad\leq \sum_{n-2mn^{\beta'}\leq j,l\leq n+2mn^{\beta'}}\left|r_{j,l}\left(n,\beta,\beta'\right)\right|\\
    &\qquad\leq 4m^2n^{2\beta'}\mathcal{O}\left(n^{2\gamma}\exp\left(-d\cdot \left(n^{\beta-\gamma}-n^{\beta'-\gamma}\right)\right)\right)\\
\end{split}
\eeq 
which vanishes as $n\to \infty$.
\end{proof}
To summarize all the above, we have shown that 
\beq
\begin{split}
    &\left\|P_{n-2mn^{\beta'},n+2mn^{\beta'}}\left(\left(J_0-z_n\right)^{-1}-\left(J-z_n\right)^{-1}\right)P_{n-2mn^{\beta'},n+2mn^{\beta'}}\right\|_1\\
    &\quad=\left\|P_{n-2mn^{\beta'},n+2mn^{\beta'}}\left(\left(H^{n,\beta}_0-z_n\right)^{-1}-\left(H^{n,\beta}-z_n\right)^{-1}\right)P_{n-2mn^{\beta'},n+2mn^{\beta'}}\right\|_1+o(1)\\
    &\qquad\leq \left\|G_m^{(n),\beta'}(z_n)\right\|_1+o(1)
\end{split}
\eeq
as $n\to \infty$.\\
We are now thus left with proving $$ \frac{1}{n^\gamma}\left\|G_m^{(n),\beta'}(z_n)\right\|_1\xrightarrow[n\to \infty]{}0$$ in order to prove \eqref{eq:onetermgoesto0}.\\
\subsection{\textbf{Step 2: Entry-wise estimations}}
\hfill\\
We start by explicitly computing the entries of $G_m^{(n),\beta'}(z_n)$.\\
It is known, and not hard to verify that 
\beq
\left(J_0-z_n\right)^{-1}_{j,k}=\frac{\phi(z_n)^{|k-j|}-\phi(z_n)^{k+j}}{\phi(z_n)-1/\phi(z_n)}
\eeq
here $\phi(\zeta)=\frac{\zeta-\sqrt{\zeta^2-4}}{2}$, where the square root is taken such that $\zeta \to \phi(\zeta)$ is analytic in $\bbC \setminus[-2,2]$ and $\phi(\zeta)= \mathcal O(1/\zeta)$ as $\zeta \to \infty$. Note that this map maps  $\bbC \setminus[-2,2]$ onto  the open unit disk and hence $|\phi(z_n)|<1$.\\
We thus obtain 
$$\frac{\left(J_0-z_n\right)^{-1}_{j,r}\cdot \left(J_0-z_n\right)^{-1}_{r,k}}{1+\lambda_k\cdot\left(J_0-z_n\right)^{-1}_{r,r}}=A(z_n,\lambda_k)\cdot \left(\phi(z_n)^{|r-j|}-\phi(z_n)^{r+j}\right)\left(\phi(z_n)^{|r-k|}-\phi(z_n)^{r+k}\right)$$
where $A(z_n,\lambda_k)=\left(\frac{1}{\phi(z_n)-1/\phi(z_n)}\right)^2\cdot \frac{1}{1+\lambda_k\frac{1-\phi(z_n)^{2r}}{\phi(z_n)-1/\phi(z_n)}}$.\\
Note that
\begin{equation}\label{eq:asymptoticsphi}\phi(z_n)= \begin{cases}
\frac{x_0 - {\rm i}  \sqrt{4-x_0^2}}{2}\left(1+ \frac{{\rm i } \eta }{n^{\gamma} \sqrt{4-x_0^2} } + \mathcal O(n^{-2\gamma})\right),& \Im \eta>0\\
\frac{x_0+ {\rm i}  \sqrt{4-x_0^2}}{2}\left(1- \frac{{\rm i } \eta }{n^{\gamma} \sqrt{4-x_0^2} } + \mathcal O(n^{-2\gamma})\right), & \Im \eta<0
\end{cases} 
\end{equation}
and therefore $$\left|\phi(z_n)\right|= 1-\frac{\left|\text{Im}\eta\right|}{n^\gamma\sqrt{4-x_0^2}}+o\left(n^{-\gamma}\right)\leq \exp\left(-c\cdot n^{-\gamma}\right)$$
for some $c>0$.\\
Let $n-2mn^{\beta'}\leq r\leq n+2mn^{\beta'}$ then we have the following  \begin{enumerate}
    \item $\phi(z_n)^{2r}=o(1)$ as $n\to \infty$.
    \item $\left|\phi(z_n)-1/\phi(z_n)\right|\xrightarrow[n\to \infty ]{}\sqrt{4-x_0^2}\not=0$.
    \item $\left|\phi(z_n)^{r+k}\right|,\left|\phi(z_n)^{r+j}\right|=\mathcal{O}\left(\exp\left(-c\cdot n^{1-\gamma}\right)\right)$ as $n\to \infty$.
    \item $\left|A(z_n,\lambda_k)\right|\xrightarrow[n\to \infty]{}\frac{1}{4-x_0^2}$.
\end{enumerate}
which means that $G_m^{(n),\beta'}(z_n)$ can be decomposed  as follows (recall $n_k=r$)
$$G_m^{(n),\beta'}(z_n)=\lambda_k\cdot T_m^{(n),\beta'}(z_n)+\mathcal{R}^{(n),\beta'}(z_n)$$ where $T_m^{(n),\beta'}(z_n)$ is the matrix
\beq\label{eq:Tbeta}
T_m^{(n),\beta'}(z_n)_{j,l}=\begin{cases}
    A(z_n,\lambda_k)\phi(z_n)^{|r-k|+|r-j|}\quad  n-2mn^{\beta'}\leq j,l\leq n+2mn^{\beta'}\\
    0 \qquad otherwise
\end{cases}\eeq
and, as $n\to \infty$
$$\mathcal{R}^{(n),\beta'}(z_n)_{j,l}=\begin{cases}
    \mathcal{O}\left(\exp\left(-c\cdot n^{1-\gamma}\right)\right)\quad  n-2mn^{\beta'}\leq j,l\leq n+2mn^{\beta'}\\
    0 \qquad otherwise
\end{cases}.$$
We have 
$$\left\|G^{(n),\beta'}(z_n)\right\|_1\leq\left|\lambda_k\right|\left\|T^{(n),\beta'}(z_n)\right\|_1+\left\|\mathcal{R}^{(n),\beta'}(z_n)\right\|_1 $$ and it can be readily seen that $$\left\|\mathcal{R}^{(n),\beta'}(z_n)\right\|_1\leq\sum_{n-2mn^{\beta'}\leq j,l\leq n+2mn^{\beta'}}\left|\mathcal{R}^{(n),\beta'}(z_n)_{j,l}\right|= o(1)$$ as $n\to \infty$.\\
Therefore, as $n\to \infty$
\beq \label{eq:G<T+o}
\left\|G_m^{(n),\beta'}(z_n)\right\|_1\leq\left|\lambda_k\right|\left\|T_m^{(n),\beta'}(z_n)\right\|_1+o(1) \eeq so by estimating the trace norm of $T_m^{(n),\beta'}(z_n)$, we should obtain an upper bound on the asymptotic behavior of the trace norm of $G_m^{(n),\beta'}(z_n)$.
\subsection{\textbf{Step 3: Estimating the trace norm of a Hankel matrix}}
\hfill\\
Using \eqref{eq:Tbeta}, we decompose $T_m^{(n),\beta'}(z_n)$ into 4 sub-matrices, depending on $\text{sign}(r-l)$ and $\text{sign}(r-j)$ as follows
\beq \label{eq:4Hankels}
\begin{split}
   & T_m^{(n),\beta'}(z_n)\\
   &= \left(\frac{1}{4-x_0^2}+o(1)\right)P_{n-2mn^{\beta'},n+2mn^{\beta'}}\left(
    \begin{array}{c|c}
      \begin{array}{cccc}
      \ddots & \vdots& \vdots&\vdots \\
       \ldots&  q_n^5 & q_n^4 & q_n^3 \\
       \ldots&  q_n^4 & q_n^3 & q_n^2 \\
        \ldots& q_n^3& q_n^2  & q_n
      \end{array} & \begin{array}{cccc}
      \vdots & \vdots& \vdots&\ddots \\
       q_n^3&  q_n^4 & q_n^5 & \ldots \\
       q_n^2&  q_n^3 & q_n^4 & \ldots \\
        q_n& q_n^2& q_n^3  & \ldots
      \end{array}\\
      \hline
      \begin{array}{cccc}
      \ldots & q_n^3& q_n^2&q_n \\
       \ldots&  q_n^4 & q_n^3 & q_n^2\\
       \ldots&  q_n^5 & q_n^4 &q_n^3 \\
        \ddots&  \vdots&  \vdots  & \ddots
      \end{array}& \begin{array}{cccc}
      1 & q_n& q_n^2&\ldots \\
       q_n&  q_n^2 & q_n^3 & \ldots \\
       q_n^2&  q_n^3 & q_n^4 &\ldots \\
        \vdots&  \vdots&  \vdots  & \ddots
      \end{array}
    \end{array}
    \right)P_{n-2mn^{\beta'},n+2mn^{\beta'}}\\
\end{split}\eeq
  where $q_n=\phi(z_n)$ and the '1' is located in the $(r,r)$ entry.\\
\begin{definition}
    An infinite (one sided) matrix $\mathcal{H}$ is called ``Hankel'' if there exists $g:\mathbb{N}\to \mathbb{C}$ so that for every $k,j\in\mathbb{N}$, $\mathcal{H}_{j,k}=g(j+k)$. Namely, $\mathcal{H}$ is constant along its skew diagonals. We say that $\mathcal{H}$ is associated with a symbol $a:\partial\mathbb{D}\to \mathbb{C}$ if $\mathcal{H}_{j,k}=\hat{a}_{j+k-1}$ where $a(z)=\sum_{k=-\infty}^\infty\hat{a}_{k}z^k $, i.e. $\hat{a}_{k}$ is the $k'th$ Fourier coefficient of $a$, and usually we denote $\mathcal{H}(a)$. 
\end{definition}

Examining \eqref{eq:4Hankels}, the unique structure of $T_m^{(n),\beta'}$ can be viewed as a matrix composed of four (partially flipped and shifted) Hankel matrices.\\
We denote $$\mathcal{H}_{z_n}=\left(q_n^{j+k}\right)_{0\leq j,k},$$
so $\mathcal{H}_{z_n}$ is a Hankel matrix.\\
Let $\left\{e_i\right\}_{i=1}^\infty$ the standard basis of $\ell_2\left(\bbN\right)$. We define the following matrices:
\begin{itemize}
    \item The shift operator, $S:\ell_2(\bbN)\to\ell_2(\bbN) $, $S\left(e_i\right)=e_{i+1}$ for any $i\in\mathbb{N}$.
    \item For $j\in \mathbb{N}$ we define $E_j:\ell_2(\bbN)\to\ell_2(\bbN)$, the elementary matrix which preforms $[j/2]$ coordinates interchange 
    \begin{equation*}
      E_j\left(e_i\right)=  \begin{cases}
            e_{j-i+1} \quad 1\leq i \leq j\\
            e_i \quad otherwise
        \end{cases}
    \end{equation*}
    \item $Q_1=1-\mathcal{P}_1$, that is 
    \begin{equation*}
      Q_1\left(e_i\right)=  \begin{cases}
            0 \quad i=1 \\
            e_i \quad otherwise
        \end{cases}
    \end{equation*}
    
\end{itemize}
Using these matrices and viewing \eqref{eq:4Hankels}, it is straightforward that $T_m^{(n),\beta'}(z_n)$ can be written as follows 
\beq
\begin{split}
   & T_m^{\beta'}(z_n)\\
   & = \left(({4-x_0^2})^{-1}+o(1)\right)\cdot\Big(P_{r,n+2mn^{\beta'}}\cdot S^{r}\cdot \mathcal{H}_{z_n} \cdot Q_1\cdot E_{r-1-[n-2mn^{\beta'}]}\cdot P_{n-2mn^{\beta'},r-1}\\
   & \quad+  P_{n-2mn^{\beta'},r}\cdot E_{r-[n-2mn^{\beta'}]}\cdot  \mathcal{H}_{z_n} \cdot Q_1\cdot E_{r-[n-2mn^{\beta'}]}\cdot P_{n-2mn^{\beta'},r}\\
    & \quad+  P_{r,n+2mn^{\beta'}}\cdot E_{r-[n-2mn^{\beta'}]}\cdot  \mathcal{H}_{z_n} \cdot Q_1\cdot S^r \cdot P_{n-2mn^{\beta'},r}\\
    & \quad+  P_{r,n+2mn^{\beta'}}\cdot S^r\cdot \mathcal{H}_{z_n}\cdot S^r \cdot P_{r,n+2mn^{\beta'}}\Big).
\end{split}
\eeq
Using the fact that $S,E_j,Q_1$ and $P_{n_1,n_2}$ are of norm 1 (operator norm), and the fact that we have for any two matrices $A,B$, $\left\|AB\right\|_1\leq \|A\|_1\|B\|_\infty $ we see that there exists $C>0$ which is independent of $n$ so that 
\beq \label{eq:T<H}
\left\|T^{(n),\beta'}(z_n)\right\|_1\leq C\cdot \left\|\mathcal{H}_{z_n}\right\|_1.\eeq
Our last major step towards the finalization of the proof of Theorem $\ref{thm:premain}$ is 
\begin{prop} \label{prop:H_isBigO}
We have that $$\left\|\mathcal{H}_{z_n}\right\|_1=\mathcal{O}\left(n^\gamma\right)$$ as $n\to \infty$. In Particular, $\left\|T^{(n),\beta'}(z_n)\right\|_1=\mathcal{O}\left(n^\gamma\right)$ as $n\to \infty$.
\end{prop}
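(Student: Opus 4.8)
The plan is to estimate the trace norm of the Hankel matrix $\mathcal{H}_{z_n} = (q_n^{j+k})_{0\le j,k}$, where $q_n = \phi(z_n)$ satisfies $|q_n| \le \exp(-c\, n^{-\gamma})$ for some $c>0$. The key observation is that $\mathcal{H}_{z_n}$ is not merely Hankel but is in fact the rank-one-looking object built from the geometric vector $v_n = (1, q_n, q_n^2, q_n^3, \dots)^T$: indeed $(\mathcal{H}_{z_n})_{j,k} = q_n^j \cdot q_n^k$, so $\mathcal{H}_{z_n} = v_n v_n^T$ (no conjugate, since this is a Hankel rather than Toeplitz object). Hence $\mathcal{H}_{z_n}$ has rank one, and its single nonzero singular value equals $\|v_n\|_2^2 = \sum_{j\ge 0} |q_n|^{2j} = \frac{1}{1-|q_n|^2}$. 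Therefore
\beq
\left\|\mathcal{H}_{z_n}\right\|_1 = \frac{1}{1-|q_n|^2}.
\eeq

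First I would record the rank-one factorization $\mathcal{H}_{z_n} = v_n v_n^{T}$ and note that for a rank-one operator $w w^{*}$ the trace norm is exactly $\|w\|_2^2$ (the operator is self-adjoint up to the absence of conjugation, but in any case its only singular value is $\|v_n\|_2^2$ since $\mathcal{H}_{z_n}^{*}\mathcal{H}_{z_n} = \overline{\|v_n\|_2^2}\, \bar v_n \bar v_n^{*}$ after the appropriate bookkeeping — concretely, $\|\mathcal{H}_{z_n}\|_1 = \sqrt{\Tr(\mathcal{H}_{z_n}^{*}\mathcal{H}_{z_n})}$ because the operator has rank one). Second, I would compute $\|v_n\|_2^2 = (1-|q_n|^2)^{-1}$ using that $|q_n| < 1$. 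Third, I would insert the asymptotics from \eqref{eq:asymptoticsphi}, namely $|q_n| = 1 - \frac{|\Ima \eta|}{n^\gamma \sqrt{4-x_0^2}} + o(n^{-\gamma})$, which gives $1 - |q_n|^2 = (1-|q_n|)(1+|q_n|) = \frac{2|\Ima \eta|}{n^\gamma \sqrt{4-x_0^2}} + o(n^{-\gamma})$, hence $\|\mathcal{H}_{z_n}\|_1 = \frac{\sqrt{4-x_0^2}}{2|\Ima \eta|}\, n^\gamma + o(n^\gamma) = \mathcal{O}(n^\gamma)$. Finally, combining with \eqref{eq:T<H}, which states $\|T^{(n),\beta'}(z_n)\|_1 \le C\|\mathcal{H}_{z_n}\|_1$ with $C$ independent of $n$, gives $\|T^{(n),\beta'}(z_n)\|_1 = \mathcal{O}(n^\gamma)$ as well.

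I do not expect a serious obstacle here: the main point is simply recognizing that the Hankel matrix with symbol-coefficients $q_n^{j+k}$ degenerates to a rank-one matrix, which trivializes the Schatten-$1$ norm computation. The only mild care needed is the bookkeeping for the singular value of a non-self-adjoint (complex) rank-one matrix, which is handled by $\|ww^{T}\|_1 = \|w\|_2 \|\bar w\|_2 = \|w\|_2^2$; and making sure the $o(n^{-\gamma})$ error in \eqref{eq:asymptoticsphi} does not spoil the leading-order estimate of $1-|q_n|^2$, which it does not since $|\Ima\eta|>0$ keeps the leading term from vanishing. An alternative, if one prefers to avoid the rank-one shortcut, is to bound $\|\mathcal{H}_{z_n}\|_1 \le \sum_{j,k\ge 0} |q_n|^{j+k} = (1-|q_n|)^{-2}$ by summing absolute values of the entries (since the trace norm is dominated by the sum of absolute values of matrix entries for these one-sided matrices, or via $\|\mathcal{H}_{z_n}\|_1 \le \sum_{\ell \ge 0} (\ell+1)|q_n|^{\ell}$ counting skew-diagonals) — this gives the slightly weaker $\mathcal{O}(n^{2\gamma})$, so the rank-one computation is what is needed for the sharp $\mathcal{O}(n^\gamma)$ claimed in the proposition.
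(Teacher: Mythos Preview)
Your proof is correct and takes a genuinely different --- and considerably simpler --- route than the paper. The paper treats $\mathcal{H}_{z_n}$ as a general Hankel operator and invokes the equivalence between the Schatten-$1$ norm of a Hankel operator and the Besov $B^1_1$ norm of its symbol (via Peller's theorem, through the estimate in \cite{Hankel_trace}). It then bounds the Besov-type quantities $A(\cdot)$ and $B(\cdot)$ for the real and imaginary parts of $g_n(x)=\phi(z_n)^x$, majorizing everything by the explicit exponential $h_n(x)=D\exp(-dx/n^\gamma)$ and computing the resulting sums and $L^2$ integrals to arrive at $\mathcal{O}(n^\gamma)$. Your argument instead notices that $(\mathcal{H}_{z_n})_{j,k}=q_n^j\,q_n^k$ factors, so $\mathcal{H}_{z_n}=v_n v_n^{T}$ is rank one, whence $\|\mathcal{H}_{z_n}\|_1=\|v_n\|_2^2=(1-|q_n|^2)^{-1}$, and the asymptotics \eqref{eq:asymptoticsphi} give the $\mathcal{O}(n^\gamma)$ bound immediately. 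The paper's machinery would be needed for Hankel matrices that are not rank one (e.g.\ if the generating sequence were not purely geometric), but for this specific $\mathcal{H}_{z_n}$ your rank-one observation is both sharper --- you actually obtain the exact value of the trace norm with leading constant $\tfrac{\sqrt{4-x_0^2}}{2|\Ima\eta|}$ --- and entirely elementary.
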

\begin{proof}
    We define $g_n:[0,\infty) \to \bbC$ by $g_n(x)=\left(\left(\phi(z_n)\right)^{n^\gamma}\right)^{\frac{x}{n^\gamma}}$, where the exponent is chosen with respect to the branch cut of the logarithm along the positive real line. It is easy to see (using \eqref{eq:asymptoticsphi}) that there exist $d,D>0$ independent of $n$ such that $\forall x\geq0$
    \beq\label{eq:|gn|}\left|g_n(x)\right|\leq D\cdot \exp\left(\frac{-d\cdot x}{n^\gamma}\right) \eeq
    and \beq\label{eq:|gn'|}\left|g'_n(x)\right|\leq \frac {D\cdot d}{n^\gamma}\cdot \exp\left(\frac{-d\cdot x}{n^\gamma}\right)=\left|\frac{d}{dx}\left(D\cdot \exp\left(\frac{-d\cdot x}{n^\gamma}\right)\right)\right|.\eeq
      We see that $\left(\mathcal{H}_n\right)_{j,k}=g_n(j+k),$ or, in other words 
        \beq \mathcal{H}_{z_n}=\mathcal{H}\left(\sum_{0\leq k}g_n(k)z^k\right)=\mathcal{H}\left(\sum_{0\leq k}\text{Re}g_n(k)z^k\right)+i\mathcal{H}\left(\sum_{0\leq k}\text{Im}g_n(k)z^k\right) \eeq 
    thus
    \beq \label{eq:Hankel<ReIm}
    \left\|\mathcal{H}_{z_n}\right\|_1\leq \left\|\mathcal{H}\left(\sum_{0\leq k}\text{Re}g_n(k)z^k\right)\right\|_1+\left\|\mathcal{H}\left(\sum_{0\leq k}\text{Im}g_n(k)z^k\right)\right\|_1 .\eeq
    Writing $g_n=\Re g_n+i\Im g_n$, it is trivial that both $\Re g_n, \Im g_n$ are $C_1$ real functions which also satisfy 
    \beq \label{eq:RegImg}
    \left|\text{Re}g_n\right|,\left|\text{Im}g_n\right|\leq \left|g_n\right|
    \eeq
    and
    \beq \label{eq:Reg'Img'}
    \left|\text{Re}g'_n\right|,\left|\text{Im}g'_n\right|\leq \left|g'_n\right|.
    \eeq
     By \cite[Proposition 3.7]{Hankel_trace} one can obtain an upper estimate of the Besov norm of a given power series (for more details about Besov spaces, see \cite[Section 3]{Hankel_trace} and references therein). Based on results in \cite[Section 8]{Hankel_trace_Peller} (see for example \cite[Subsection 3.2, Equation 7]{Hankel_trace}), the Besov norm of a given power series is equivalent to the trace norm of the Hankel matrix whose symbol is the exact same power series. Combining these facts, we obtain $\mathcal{B}>0$, independent of $n$ or $g_n$ such that
    \beq \label{eq:ReHankel}\left\|\mathcal{H}\left(\sum_{0\leq k}\text{Re}g_n(k)z^k\right)\right\|_1\leq \mathcal{B}\cdot \left(\left|\text{Re}g_n(0)\right|+\sqrt{2}\left(A\left(\text{Re}g_n\right)+\sqrt{A\left(\text{Re}g_n\right)\cdot B\left(\text{Re}g_n\right)}\right)\right)\eeq 
    where for a continuously differentiable real function $f$,
    $$A(f)=\left(\sum_{k=0}^\infty \left|f(k)\right|^2\cdot\sum_{k=1}^\infty k^2\left|f(k)\right|^2\right)^{1/4}$$ 
    and
    $$B(f)=\left(\left\|x\cdot f'\right\|_{L_2(1,\infty)}\cdot \left\|x^2\cdot f'\right\|_{L_2(1,\infty)}\right)^{1/2}.$$ 
    We remark that \eqref{eq:ReHankel} holds if the real part is replaced with the imaginary part.\\
    Using inequalities \eqref{eq:|gn|},\eqref{eq:|gn'|}, \eqref{eq:RegImg},\eqref{eq:Reg'Img'} we get 
    $$A\left(\text{Re}g_n\right),A\left(\text{Im}g_n\right)\leq A\left(g_n\right)\leq A\left(D\exp\left(\frac{-d(\cdot)}{n^\gamma}\right)\right)$$ 
    and
    $$B\left(\text{Re}g_n\right),B\left(\text{Im}g_n\right)\leq B\left(g_n\right)\leq B\left(D\exp\left(\frac{-d(\cdot)}{n^\gamma}\right)\right).$$
    Therefore, denoting $h_n(x)=D\exp\left(\frac{-dx}{n^\gamma}\right)$, by \eqref{eq:Hankel<ReIm} 
    \beq \label{eq:Hankel<h_n}
    \left\|\mathcal{H}_{z_n}\right\|_1\leq2\mathcal{B}\cdot \left(\left|h_n(0)\right|+\sqrt{2}\left(A\left(h_n\right)+\sqrt{A\left(h_n\right)\cdot B\left(h_n\right)}\right)\right).
    \eeq 
    In order to evaluate the RHS of \eqref{eq:Hankel<h_n} we use for $\left|q\right|<1$:
    $$\sum_{k=0}^\infty q^k=\frac{1}{1-q}$$ 
    $$\sum_{k=1}^\infty k\cdot q^{k-1}=\frac{1}{(1-q)^2}$$ and
    $$\sum_{k=1}^\infty k\cdot(k-1) q^{k-2}=\frac{2}{(1-q)^3}.$$ 
    Plugging $q=\left|h_n(1)\right|^2$, we note that $\left|h_n(k)\right|^2=q^k$ and $\left|h_n(1)\right|^2=1-\frac{2d}{n^\gamma}+o\left(n^{-\gamma}\right)$ as $n\to \infty$, therefore we have $$\sum_{k=1}^\infty \left|h_n(k)\right|^2\cdot\sum_{k=1}^\infty k^2\left|h_n(k)\right|^2=\mathcal{O}\left({n^{4\gamma}}\right)$$ as $n \to \infty$, which means $A\left(h_n\right)=\mathcal{O}\left({n^\gamma}\right)$ as $n\to \infty$.\\
    Moving to estimating $B\left(h_n\right)$, we see that using $\left\|x\cdot h'_n(x)\right\|_2=D\cdot d\cdot  \left\|\frac{x}{n^\gamma}\exp\left(\frac{-dx}{n^\gamma}\right)\right\|_2$  
    \beq
    \int_0^\infty \left|\frac{x}{n^\gamma}\exp\left(-d\cdot\frac{x}{n^\gamma}\right)\right|^2\text{d}x\underset{t=x/n^\gamma}{=} n^\gamma\int_0^\infty \left|t\exp\left(-d\cdot t\right)\right|^2\text{d}t=\mathcal{O}\left(n^\gamma\right)
    \eeq
    as $n\to \infty$, hence $\left\|x\cdot h'_n(x)\right\|_2=\mathcal{O}\left(n^{\gamma/2}\right)$.\\
    Moreover, we see $\|x^2\cdot h'_n(x)\|_2=D\cdot d\cdot  \|\frac{x^2}{n^\gamma}\exp\left(\frac{-dx}{n^\gamma}\right)\|_2$.
    When computing the integral as above, one can see $$\left\|x^2\cdot h'_n(x)\right\|_2=\mathcal{O}\left(n^{\frac{3}{2}\gamma}\right),$$ thus $B\left(h_n\right)=\mathcal{O}\left(n^\gamma\right)$ as $n\to \infty$.\\
    We thus get by \eqref{eq:Hankel<h_n},
    $$\left\|\mathcal{H}_{z_n}\right\|_1\leq \mathcal{O}\left(n^\gamma\right)+\sqrt{\mathcal{O}\left(n^\gamma\right)\cdot \mathcal{O}\left(n^\gamma\right)}$$
    as $n\to \infty$.
    This obviously implies
$$\left\|\mathcal{H}_{z_n}\right\|_1=\mathcal{O}\left(n^\gamma\right)$$
    as $n\to \infty.$\\
    We see that due to \eqref{eq:T<H} we immediately get 
    $$\left\|T^{(n),\beta'}(z_n)\right\|_1=\mathcal{O}\left(n^\gamma\right),$$ 
    this completes the proof.
\end{proof}
We are now ready to prove Theorem \ref{thm:premain}. 
\begin{proof}
Let $0<\gamma<1$, let $\gamma<\beta<1$, and let $\text{d}\mu_0(x)=\frac{\sqrt{4-x^2}}{2\pi}\text{d}x$ and $\mu$ such that $J_0$ and $J$, the corresponding Jacobi matrices are $\beta$-sparse perturbation of each other. Let $N\in\mathbb{N}$ and $\eta_1\ldots,\eta_N\in\mathbb{C}$ so that $\text{Im}\eta_j\not=0$ and $c_1\ldots,c_N\in\mathbb{R}$. We define $$f(x)=\sum_{j=1}^N\frac{c_j}{x-\eta_j}.$$ By \eqref{eq:diff_of_cumulants_linearcomb} we have
\begin{equation} 
\begin{split}
    & \left|C^{\mu_0}_m\left(X^{(n)}_{\gamma,x_0}(f)\right)-C^{\mu}_m\left(X^{(n)}_{\gamma,x_0}(f)\right)\right| \\
    &\quad \leq \frac{C(m,\beta')}{n^\gamma}\sum_{j=1}^N|c_j|\left\|P_{n-2mn^{\beta'},n+2mn^{\beta'}}\left(J_0-z_{n,j}\right)^{-1}-\left(J-z_{n,j}\right)^{-1}P_{n-2mn^{\beta'},n+2mn^{\beta'}}\right\|_1.
\end{split}
\end{equation}
By Proposition \ref{CumulantJvsCumulantH} we have 
\beq \begin{split}
    &\frac{C(m,\beta')}{n^\gamma}\sum_{j=1}^N|c_j|\left\|P_{n-2mn^{\beta'},n+2mn^{\beta'}}\left(J_0-z_{n,j}\right)^{-1}-\left(J-z_{n,j}\right)^{-1}P_{n-2mn^{\beta'},n+2mn^{\beta'}}\right\|_1\\
    &\quad \leq \frac{C(m,\beta')}{n^\gamma}\sum_{j=1}^N|c_j|\left\|P_{n-2mn^{\beta'},n+2mn^{\beta'}}\left(H^{n,\beta}_0-z_{n,j}\right)^{-1}-\left(H^{n,\beta}-z_{n,j}\right)^{-1}P_{n-2mn^{\beta'},n+2mn^{\beta'}}\right\|_1+o(1) \\
\end{split}
\eeq 
as $n\to \infty$, and by Lemma \ref{lemmarjk}
\beq \begin{split}
    &\frac{C(m,\beta')}{n^\gamma}\sum_{j=1}^N|c_j|\left\|P_{n-2mn^{\beta'},n+2mn^{\beta'}}\left(H^{n,\beta}_0-z_{n,j}\right)^{-1}-\left(H^{n,\beta}-z_{n,j}\right)^{-1}P_{n-2mn^{\beta'},n+2mn^{\beta'}}\right\|_1 \\
    &\quad \leq \frac{C(m,\beta')}{n^\gamma}\sum_{j=1}^N|c_j|\left\|G^{(n),\beta'}(z_{n,j})\right\|_1+o(1)
\end{split}
\eeq 
where we have by \eqref{eq:G<T+o}  
\begin{equation*}
    \left\|G^{(n),\beta'}(z_{n,j})\right\|_1\leq \left|\lambda_{[n-2mn^\beta,n+2mn^\beta]}\right|\left\|T^{(n),\beta'}(z_{n,j})\right\|_1+o(1)  
\end{equation*}
where $\lambda_{[n-2mn^\beta,n+2mn^\beta]}$ is $\lambda_k$ if $n_k\in [n-2mn^\beta,n+2mn^\beta]$ for some $k$, 0 else. $\lambda_{[n-2mn^\beta,n+2mn^\beta]}$ is well defined (for a large enough $n$) by the $\beta$-spacing condition. Therefore
\beq \begin{split}
    &\frac{C(m,\beta')}{n^\gamma}\sum_{j=1}^N|c_j|\left\|G^{(n),\beta'}(z_{n,j})\right\|_1\\
    &\quad\leq\frac{C(m,\beta')\left|\lambda_{[n-2mn^\beta,n+2mn^\beta]}\right|}{n^\gamma}\sum_{j=1}^N|c_j|\left\|T^{(n),\beta'}(z_{n,j})\right\|_1+o(1) 
\end{split}
\eeq 
which by Proposition \ref{prop:H_isBigO}, as $n\to \infty$ 
\beq \begin{split}
    &\frac{C(m,\beta')\left|\lambda_{[n-2mn^\beta,n+2mn^\beta]}\right|}{n^\gamma}\sum_{j=1}^N|c_j|\left\|T^{(n),\beta'}(z_{n,j})\right\|_1\\
    &\quad= \frac{\left|\lambda_{[n-2mn^\beta,n+2mn^\beta]}\right|}{n^\gamma}\mathcal{O}\left(n^\gamma\right).\\
\end{split}
\eeq
Finally, we thus get 
\begin{equation} 
\begin{split}
    & \left|\mathcal{C}^{(n),\mu_0}_m\left(X^{(n)}_{\gamma,x_0}(f)\right)-\mathcal{C}^{(n),\mu}_m\left(X^{(n)}_{\gamma,x_0}(f)\right)\right| \\
    &\quad \leq \frac{\left|\lambda_{[n-2mn^\beta,n+2mn^\beta]}\right|}{n^\gamma}\mathcal{O}\left(n^\gamma\right)+o(1).
\end{split}
\end{equation}
as $n\to \infty.$\\
Recall $\lambda_k\to 0$ as $k\to \infty$, therefore $\lambda_{[n-2mn^\beta,n+2mn^\beta]}\xrightarrow[n\to\infty]{}0$, which means 
\beq
\left|\mathcal{C}^{(n),\mu_0}_m\left(X^{(n)}_{\gamma,x_0}(f)\right)-\mathcal{C}^{(n),\mu}_m\left(X^{(n)}_{\gamma,x_0}(f)\right)\right| \xrightarrow[n\to \infty]{}0
\eeq 
this completes the proof. \end{proof}
With Theorem \ref{thm:premain} in hand, we are ready to prove Theorem \ref{thm:main}
\begin{proof}
Let $0<\gamma<1$ and $0<\gamma<\beta<1$. Let $x_0\in(-2,2)$ and let $\mu_0,\mu,J_0$ and $J$ be as in Theorem \ref{thm:main}. \\
We observe that both $J_0$ and $J$ are self adjoint, real matrices, therefore $\left((J-z_n)^{-1}\right)^*=(J-\overline{z_n})^{-1}$ which means that 
\beq 
\frac{1}{n^\gamma}\left\|P_{n-2mn^{\beta'},n+2mn^{\beta'}}\left(\left(J_0-\overline{z_n}\right)^{-1}-\left(J-\overline{z_n}\right)^{-1}\right)P_{n-2mn^{\beta'},n+2mn^{\beta'}}\right\|_1\xrightarrow[n\to \infty]{}0\eeq 
as well.\\
Note that for $$\mathfrak{f}(x)=\text{Im}\sum_{j=1}^N\frac{c_j}{x-\eta_j}$$ (where the parameters are as in \eqref{eq:f_poissonkernel}) we have 
$$\mathfrak{f}\left(n^\gamma \left(J-x_0\right)\right)=\frac{1}{2in^\gamma}\sum_{j=1}^Nc_j\left(\left(J-z_{n,j}\right)^{-1}-\left(J-\overline{z_{n,j}}\right)^{-1}\right)$$ with $z_{n,j}$ as in \eqref{eq:diff_of_cumulants_linearcomb}. 
So by Proposition \ref{thm:comp_cumulants} 
\begin{equation} 
\begin{split}
    & \left|\mathcal{C}^{(n),\mu_0}_m\left(X^{(n)}_{\gamma,x_0}(\mathfrak{f})\right)-\mathcal{C}^{(n),\mu}_m\left(X^{(n)}_{\gamma,x_0}(\mathfrak{f})\right)\right| \\
    &\quad \leq \frac{C(m,\beta')}{n^\gamma}\sum_{j=1}^N|c_j|\Bigg(\left\|P_{n-2mn^{\beta'},n+2mn^{\beta'}}\left(J_0-z_{n,j}\right)^{-1}-\left(J-z_{n,j}\right)^{-1}P_{n-2mn^{\beta'},n+2mn^{\beta'}}\right\|_1\\
    &\qquad\qquad\qquad\qquad\quad +\left\|P_{n-2mn^{\beta'},n+2mn^{\beta'}}\left(J_0-\overline{z_{n,j}}\right)^{-1}-\left(J-\overline{z_{n,j}}\right)^{-1}P_{n-2mn^{\beta'},n+2mn^{\beta'}}\right\|_1\Bigg)
\end{split}
\end{equation}
which by Theorem \ref{thm:premain}, goes to 0 as $n\to \infty$. We have hence established a mesoscopic universality for linear combinations of functions of form $\text{Im}\frac{1}{x-\eta}$.\\
Having shown that for any $m\geq 2$ $$\left|\mathcal{C}^{(n),\mu_0}_m\left(X^{(n)}_{\gamma,x_0}(\mathfrak{f})\right)-\mathcal{C}^{(n),\mu}_m\left(X^{(n)}_{\gamma,x_0}(\mathfrak{f})\right)\right| \xrightarrow[n\to \infty]{}0$$ the rest of the proof for $f\in C_c^1\left(\mathbb{R}\right)$ follows precisely as in \cite[Section 5]{BreuerDuitsMeso}. The main ingredient of the proof in \cite{BreuerDuitsMeso} is a density argument that can be applied only when both $\mathcal{C}^{(n),\mu_0}_2\left(X^{(n)}_{\gamma,x_0}(\mathfrak{g})\right)$ and $\mathcal{C}^{(n),\mu}_2\left(X^{(n)}_{\gamma,x_0}(\mathfrak{g})\right)$ are bounded (this is \cite[Proposition 5.2]{BreuerDuitsMeso}), where $\mathfrak{g}(x)=\frac{1}{x-i}$. Indeed, in our case, $\mu_0$ with $x_0\in(-2,2)$ satisfy the conditions of \cite[Proposition 5.2]{BreuerDuitsMeso} and hence $\text{Var}_{\mu_0}\left(X^{(n)}_{\gamma,x_0}(\mathfrak{g})\right)$ is bounded, which is the same as saying that $\mathcal{C}^{(n),\mu_0}_2\left(X^{(n)}_{\gamma,x_0}(\mathfrak{g})\right)$ is. We know, by Theorem \ref{thm:premain} that $$\left|\mathcal{C}^{(n),\mu_0}_2\left(X^{(n)}_{\gamma,x_0}(\mathfrak{g})\right)-\mathcal{C}^{(n),\mu}_2\left(X^{(n)}_{\gamma,x_0}(\mathfrak{g})\right)\right| \xrightarrow[n\to \infty]{}0$$
hence $\text{Var}_{\mu}\left(X^{(n)}_{\gamma,x_0}(\mathfrak{g})\right)$ is bounded as well. From here, the proof follows the same steps as the proof given in \cite[Section 5]{BreuerDuitsMeso}. This completes the proof of Theorem \ref{thm:main}.
\end{proof}

\end{document}